\documentclass[11p,reqno]{amsart}

\topmargin=0cm\textheight=22cm\textwidth=15cm
\oddsidemargin=0.5cm\evensidemargin=0.5cm
\setlength{\marginparwidth}{2cm}
\usepackage[T1]{fontenc}
\usepackage{graphicx}
\usepackage{amssymb,amsthm,amsmath,mathrsfs,bm,braket,marginnote}
\usepackage{enumerate}
\usepackage{appendix}
\usepackage[colorlinks=true, pdfstartview=FitV, linkcolor=blue, citecolor=blue, urlcolor=blue]{hyperref}
\usepackage{multirow}

%Graphics
\usepackage{pgf}
\usepackage{pgfplots}
\usepackage{tikz}
\usetikzlibrary{arrows,calc}
\usepackage{verbatim}
\usetikzlibrary{decorations.pathreplacing,decorations.pathmorphing}
\usepackage[numbers,sort&compress]{natbib}
\usepackage{dsfont}

\numberwithin{equation}{section}
\linespread{1.2}
\newtheorem{theorem}{Theorem}[section]

\newtheorem{definition}[theorem]{Definition}
\newtheorem{remark}[theorem]{Remark}
\newtheorem{coro}[theorem]{Corollary}

\newtheorem{proposition}[theorem]{Proposition}

 \reversemarginpar
 \newcommand{\pf}{\text{pf}}
 \newcommand{\Pf}{\text{Pf}}

\newcommand{\s}{\mathbb{S}}

\newcommand{\p}{\partial}

\newcommand{\mc}{\mathcal{C}}

\newcommand{\mt}{\mathbf{t}}
\newcommand{\ms}{\mathbf{s}}

\begin{document}
	
\title[Multiple skew orthogonal polynomials and 2-component Pfaff lattice hierarchy]{Multiple skew orthogonal polynomials and 2-component Pfaff lattice hierarchy}

\author{Shi-Hao Li}
\address{Department of Mathematics, Sichuan University, Chengdu, 610064, China}
\email{lishihao@lsec.cc.ac.cn}

\author{Bo-Jian Shen}
\address{School of Mathematical Sciences, CMA-Shanghai, Shanghai Jiaotong University, People's Republic of China}
\email{john-einstein@sjtu.edu.cn}

\author{Jie Xiang}
\address{Department of Mathematics, Sichuan University, Chengdu, 610064, China}
\email{oreki$\_$phy@163.com}

\author{Guo-Fu Yu}
\address{School of Mathematical Sciences, CMA-Shanghai, Shanghai Jiaotong University, People's Republic of China.}
\email{gfyu@sjtu.edu.cn}

\date{}

\dedicatory{}
\subjclass[2020]{37K10, 15A23}
\keywords{Multiple skew orthogonal polynomials; Pfaff $\tau$-function; 2-component Pfaff hierarchy}

\begin{abstract}
In this paper, we introduce multiple skew-orthogonal polynomials and investigate their connections with classical integrable systems. By using Pfaffian techniques, we show that multiple skew-orthogonal polynomials can be expressed by multi-component Pfaffian tau-functions upon appropriate deformations. Moreover, a two-component Pfaff lattice hierarchy, which is equivalent to the Pfaff-Toda hierarchy studied by Takasaki, is obtained by considering the recurrence relations and Cauchy transforms of multiple skew-orthogonal polynomials.
\end{abstract}

\maketitle

\section{Introduction}

In recent decades, the interplay between random matrix theory and integrable systems attracted much attention due to the development of both fields. A crucial observation in the connection is that the partition functions of different random matrix models can act as the $\tau$-functions of corresponding integrable hierarchies upon appropriate deformations. Such an observation was found by making use of semi-classical orthogonal polynomials for different integrable hierarchies, such as Painlev\'e hierarchy \cite{bertola06,ormerod11} and Toda hierarchy \cite{adler95,aptekarev97}. 

It is well known that a sequence of orthogonal polynomials $\{p_n(x)\}_{n\in\mathbb{N}}$ can be characterized by an analytic, non-negative weight $\omega(x)$ such that
\begin{align}\label{or}
\int_{\mathbb{R}} p_n(x)p_m(x)\omega(x)dx=\delta_{n,m}.
\end{align}
According to Favard's theorem \cite{favard35,chihara78}, the orthogonal relation \eqref{or} can be equivalently expressed by a three-term recurrence relation
\begin{align}\label{ttrr}
xp_n(x)=a_np_{n+1}(x)+b_np_n(x)+a_{n-1}p_{n-1}(x),\quad p_{-1}(x)=0,\quad p_0(x)=1,
\end{align}
for a sequence of coefficients $\{a_n,b_n\}_{n\in\mathbb{N}}$, providing a Jacobi matrix form
\begin{align*}
L=\left(\begin{array}{ccccc}
b_0&a_0&&&\\
a_0&b_1&a_1&&\\
&a_1&b_2&a_2&\\
&&\ddots&\ddots&\ddots
\end{array}
\right),\quad x\Psi(x)=L\Psi(x),\quad \Psi(x)=(p_0(x),p_1(x),\cdots)^\top.
\end{align*}

Semi-classical orthogonal polynomials were firstly considered by Shobat \cite{shobat39} and later by Freud \cite{freud76}. In such case, time parameters $\mt=(t_1,t_2,\cdots)$ were introduced into the weight such that 
\begin{align*}
\p_{t_n}\omega(x;\mt)=x^n\omega(x;\mt).
\end{align*}
Therefore, orthogonal polynomials with semi-classical weight are time-dependent and result in the formula
\begin{align}\label{timee}
\p_{t_1}p_n(x;\mt)=-\frac{1}{2}b_np_n(x;\mt)-a_np_{n-1}(x;\mt).
\end{align}
In literatures, there are several ways to derive integrable lattices from semi-classical orthogonal polynomials. One is a direct method by using the compatibility condition of \eqref{ttrr} and \eqref{timee}, from which one gets
\begin{align*}
\p_{t_1}a_n=\frac{1}{2}a_n(b_n-b_{n-1}),\quad \p_{t_1}b_n=a_{n-1}^2-a_n^2,
\end{align*}
and this is the nonlinear form for the Toda lattice. Details and related discussions can be found in monographs \cite{forrester10,deift00}. Another way is to express orthogonal polynomials by $\tau$-functions, and integrable hierarchies can be obtained by the recurrence relation. It is known that by solving the orthogonal relation \eqref{or}, a determinantal expression for $p_n(x;\mt)$ is given by
\begin{align*}
p_n(x;\mt)=\frac{1}{\sqrt{\tau_n(\mt)\tau_{n+1}(\mt)}}\det\left(\begin{array}{cccc}
m_0&m_1&\cdots&m_n\\
\vdots&\vdots&&\vdots\\
m_{n-1}&m_n&\cdots&m_{2n-1}\\
1&x&\cdots&x^n
\end{array}
\right),
\end{align*}
where 
\begin{align*}
\tau_n(\mt)=\det(m_{i+j})_{i,j=0}^{n-1},\quad m_i=\int_{\mathbb{R}}x^i\omega(x;\mt)dx.
\end{align*}
Shifting $\mt$ backwards by $[x^{-1}]$ in the $\tau$-function yields a polynomial in $x$, we have
\begin{align}\label{taue}
p_n(x;\mt)=x^n\frac{\tau_n(\mt-[x^{-1}])}{\sqrt{\tau_n(\mt)\tau_{n+1}(\mt)}},\quad [x^{-1}]=\left(
\frac{x^{-1}}{1},\frac{x^{-2}}{2},\cdots
\right).
\end{align}
Moreover, if one substitutes such formula into the recurrence relation \eqref{ttrr}, the Toda hierarchy with neighboring points is obtained. Such hierarchy can also be derived by using Cauchy transforms. If one considers the Cauchy transform of orthogonal polynomials
\begin{align*}
\int_{\mathbb{R}}\frac{p_n(x;\mt)}{z-x}\omega(x;\mt)dx=z^{-n-1}\frac{\tau_{n+1}(\mt+[z^{-1}])}{\sqrt{\tau_n(\mt)\tau_{n+1}(\mt)}},
\end{align*}
then from the orthogonality, one has the formula
\begin{align}
\begin{aligned}
0=\int_{\mathbb{R}}p_n(x;\mt)p_{n-1}(x;\mt')\omega(x;t)dx=\frac{1}{2\pi i}\oint_{C_\infty} \tau_n(\mt-[z^{-1}])\tau_{n}(\mt'+[z^{-1}])e^{\xi(\mt,z)-\xi(\mt',z)}dz,
\end{aligned}
\end{align}
where $\xi(\mt,z)=\sum_{i=1}^\infty t_iz^i$. This formula is valid for all $t,\,t'\in\mathbb{C}$ and gives a bilinear identity of KP hierarchy \cite{adler95,jimbo83}. 

Relations between orthogonal polynomials and integrable systems are clearly depicted by considering different  generalizations of the orthogonal relation \eqref{or},
which, in fact, is given by a symmetric, positive definite, and real bilinear form
\begin{align*}
\langle \cdot,\cdot\rangle:\mathbb{R}[x]\times\mathbb{R}[x]\to\mathbb{R}
\end{align*}
such that $\langle x^i,x^j\rangle=\langle x^j,x^i\rangle$. Therefore, the generalizations of orthogonality is equivalent to the extensions of the bilinear form.  A non-symmetric generalization to the bilinear form admits
\begin{align*}
\langle x^i,x^j\rangle=\int_{\mathbb{R}}x^{i+\theta j}\omega(x)dx,\quad \theta\in\mathbb{R}_+.
\end{align*}
This bilinear form is related to the random matrix models with additional interaction proposed by Muttalib and Borodin, and corresponding polynomials were referred to as bi-orthogonal polynomials \cite{muttalib95,borodin98}. There is another kind of bi-orthogonality by considering a bilinear form acting on $\mathbb{R}[x]\times\mathbb{R}[y]$, such that
\begin{align}\label{bf1}
\langle x^i,y^j\rangle=\int_{\mathbb{R}^2} x^iy^j\mathbb{K}(x,y)\omega_1(x)\omega_2(y)dxdy,
\end{align}
where $\mathbb{K}(x,y)$ is a kernel function and $\omega_1$, $\omega_2$ are weights with respect to $x$ and $y$ respectively. Such bi-orthogonal polynomials were introduced by considering matrices coupled in a chain \cite{eynard98} and Cauchy two-matrix models \cite{bertola09}. Specifically, skew-symmetric kernels arisen from orthogonal and symplectic ensembles in random matrix models are of particular interest. Above mentioned orthogonal polynomials are all related to integrable systems if appropriate time deformations are assumed. Examples include Gelfand-Dickey hierarchy (Muttalib-Borodin case) \cite{tsujimoto00}, 2d-Toda hierarchy (coupled chain case) \cite{adler99},  CKP hierarchy (Cauchy two-matrix model case) \cite{li19}, Pfaff lattice/DKP hierarchy (orthogonal/symplectic ensemble case) \cite{adler992,kakei00} and BKP hierarchy (Bures ensemble case) \cite{hu17}. 

Multiple orthogonal polynomials (MOPs) as a generalization of orthogonal polynomials is a sequence of polynomials orthogonal with several different weights originated in the study of what is termed Hermite--Pad\'e approximation. This is the simultaneous rational approximation
of a family of functions $\{ f_j \}$ which allow for a decaying Laurent expansion at infinity. Such functions can be written as
\begin{equation}\label{1.1}
f_j(z) = \int_{I_j} {d \mu_j(x) \over z - x},
\end{equation}
for several measures $\{\mu_j\}$. It is these measures which directly relate to the orthogonality of MOPs; see e.g.~the brief survey \cite{mf16}.

A relatively recent application of MOPs is in the field of random matrices. The Gaussian unitary ensemble is the set of
$N \times N$ random complex Hermitian matrices $\{H\}$, chosen with a probability density function (PDF) proportional to
$e^{- {\rm Tr} \, H^2}$. In particular, the diagonal entries are all independent real normal random variables with
mean zero and standard deviation $1/\sqrt{2}$ (denoted N$[0,1/\sqrt{2}]$),
while the upper triangular entries of $H$ are similarly independent and identically distributed, with complex
normal distribution
N$[0,1/2] + i {\rm N}[0,1/2]$.  Modifying this ensemble so that the entries have a non-zero mean, the corresponding
PDF becomes proportional to $e^{- {\rm Tr} \, (H - A)^2}$, where $A$ is a fixed complex Hermitian matrix. 
The new ensemble is referred to as the Gaussian unitary ensemble with a source \cite{brezin98}.
Let the
eigenvalues of $A$ be denoted $\{a_j\}$. A result of Bleher and Kuijlaars \cite{bleher04} gives that the average
characteristic polynomial $\langle x \mathbb I - (H - A) \rangle$ can be expressed in terms of a particular type
II MOPs\footnote{For formal definitions of type I and type II MOPs, please refer to Sec \ref{sec2.1}.}--- referred to as multiple Hermite polynomials --- where the family of measures are proportional to $\{ e^{-x^2 + 2a_j x}\}_{j=1}^N$. This same random
matrix model, and thus the relevance of the multiple Hermite polynomials, relates to non-intersecting Brownian bridges \cite{aptekarev05}. 
Moreover, in \cite{desrosiers08} the chiral generalization of the Gaussian unitary ensemble with a source is related to particular type I and type II Laguerre MOPs. 
With type I and type II MOPs closely related to non-intersecting Brownian motions, a generalized MOPs called  mixed type MOPs was proposed in \cite{daems07} to make further assumptions on paths, and their applications into integrable systems were considered in \cite{adler09,af11,aptekarev16}.

In this paper, we focus on a generalization of skew-orthogonal polynomials called multiple skew-orthogonal polynomials (MSOPs) and make connections with integrable hierarchies. Skew-orthogonal polynomials arise when the integral kernel in \eqref{bf1} is assumed to be skew-symmetric. Therefore, to give a proper definition of MSOPs, we firstly consider a bi-orthogonal generalization of MOPs in Section \ref{sec2.2}. Symmetric and skew symmetric reductions are considered in Section \ref{sec2.3} to give a determinant expressions for MSOPs. Section \ref{sec3} is devoted to the 2-component MSOPs, which are skew orthogonal with weights $\omega_1$ and $\omega_2$. Proposition \ref{prop3.1} states that 2-component MSOPs admit Pfaffian expressions as well, from which 2-component Pfaffian $\tau$-functions could be involved. Then we introduce two different sets of time variables $\mt=(t_1,t_2,\cdots)$ and $\ms=(s_1,s_2,\cdots)$ into weights $\omega_1$ and $\omega_2$ respectively, and prove deformation identities by making use of Pfaffian notations. Such identities are helpful in deriving integrable systems. Analogous to the standard orthogonal polynomials and Toda lattice hierarchy, we apply three different methods to derive integrable lattices ranging from simple to complex. The first one is shown in Section \ref{sec3} by simply comparing the coefficients in the deformation identities, and several simple equations are demonstrated. Furthermore, a systematic study in the derivation of integrable lattice hierarchy is carried out in Section \ref{sec4} from two different perspectives. One is to show that the above mentioned Pfaffian expressions can be alternatively expressed by $\tau$-functions with time evolutions. By substituting $\tau$-functions expressions into identities satisfied by MSOPs, one gets an integrable hierarchy for neighboring $\tau$-functions. A shortage in this strategy is that only neighboring $\tau$-functions are involved in resulting integrable hierarchy. We promote this method by considering a Cauchy transform method. In Section \ref{sec4.2}, we utilize the Cauchy transform of MSOPs and show that Takasaki's Pfaff-Toda hierarchy is equivalent to our 2-component Pfaff lattice hierarchy.

\section{Multiple skew-orthogonal polynomials}
In this part, we intend to introduce the concept of multiple skew-orthogonal polynomials, which are skew-orthogonal with respect to several different weights. Multiple skew-orthogonality is originated from the multiple orthogonality, and thus a brief review of the latter is firstly given to make the paper self-consistent.

\subsection{A brief review of MOPs}\label{sec2.1}
Multiple orthogonal polynomials (MOPs) are defined as polynomials of one variable that satisfy orthogonality conditions with respect to several weights \cite[Chap. 23]{ismail09}. Given a multi-index $\vec{v}\in \mathbb{N}^{p}$ with length $|\vec{v}|=\sum_{i=1}^p v_i$, and $p$ different weight functions $(\omega_1,\cdots,\omega_p)$ supported on the real line, there are two types of MOPs. Type I MOPs are collected in a vector of $ p $ polynomials 
$(A_{\vec{v},1},\cdots,A_{\vec{v},p})$, where each $ A_{\vec{v},i} $ has degree at most $ v_i-1 $, satisfying the orthogonality relations
\begin{align}\label{t1mop}
\int_{\mathbb{R}} x^k\left(
\sum_{i=1}^p A_{\vec{v},i}(x)\omega_i(x)
\right)dx=\delta_{k,|\vec{v}|-1},\quad 0\leq k\leq |\vec{v}|-1.
\end{align}
By assuming $$A_{\vec{v},i}(x)=\xi_{i,v_i-1}x^{v_i-1}+\cdots+\xi_{i,0},$$
the above relations give rise to a linear system of $ |\vec{v}| $ equations for $ |\vec{v}| $ unknown coefficients $\{\xi_{i,j},j=0,\cdots,v_i-1,i=1,\cdots,p\}$
\begin{align}\label{type1}
\left(\begin{array}{ccccccc}
m_0^{(1)}&\cdots&m_{v_1-1}^{(1)}&\cdots&m_{0}^{(p)}&\cdots&m_{v_p-1}^{(p)}\\
\vdots&&\vdots&&\vdots&&\vdots\\
m_{v_1-1}^{(1)}&\cdots&m_{2v_1-2}^{(1)}&\cdots&m_{v_1-1}^{(p)}&\cdots&m_{v_1+v_p-2}^{(p)}\\
\vdots&&\vdots&&\vdots&&\vdots\\
m_{|\vec{v}|-v_p}^{(1)}&\cdots&m_{|\vec{v}|+v_1-v_p-1}^{(1)}&\cdots&m_{|\vec{v}|-v_p}^{(p)}&\cdots&m_{|\vec{v}|-1}^{(p)}\\
\vdots&&\vdots&&\vdots&&\vdots\\
m_{|\vec{v}|-1}^{(1)}&\cdots&m_{|\vec{v}|+v_1-2}^{(1)}&\cdots&m_{|\vec{v}|-1}^{(p)}&\cdots&m_{|\vec{v}|+v_p-2}^{(p)}\end{array}
\right)\left(\begin{array}{c}
\xi_{1,0}\\\vdots\\\xi_{1,v_1-1}\\\vdots\\\xi_{p,0}\\\vdots\\\xi_{p,v_p-1}
\end{array}
\right)=\left(\begin{array}{c}
0\\\vdots\\0\\\vdots\\0\\\vdots\\1
\end{array}
\right),
\end{align}
where moments are defined by $m_j^{(i)}=\int_\mathbb{R} x^j \omega_i(x)dx$. 

The polynomials $\{ A_{\vec{v},i}, i=1,\dots,p \}$ are uniquely determined if and only if the linear system has a unique solution, which requires the determinants of moment matrices to be nonzero. This condition gives restrictions on the weights $ \omega_1,\cdots,\omega_p $. In general, there is no guarantee that for a given multi-index, the corresponding MOPs exist. A multi-index $\vec{v}$ is said to be normal for type I MOPs if $\{ A_{\vec{v},i}, i=1,\dots,p \}$ exists and is unique. If all multi-indices are normal, then the system of weights $(\omega_1,\cdots,\omega_p)$ is said to be a perfect system. There are two well-known perfect systems, one is the Angelesco system and the other is the Nikishin system, where the perfectness of the former is given by the properties of zeros of orthogonal polynomials, and that of the latter is due to the analytic property of weights. For details, please refer to \cite{fidalgo11,nikishin91}.

By considering the dual construction, type II MOPs $\{P_{\vec{v}}(x)\}$ are defined as scalar polynomials with degree $|\vec{v}|$ by the orthogonal relation
\begin{align}\label{t2}
\int_\mathbb{R} P_{\vec{v}}(x) x^j \omega_{i}(x)dx=0,\quad j=0,\cdots,v_i-1,\quad i=1,\cdots, p.
\end{align}
If we assume $P_{\vec{v}}(x)$ to be monic as a normalized condition, then a linear system of $ |\vec{v}| $ equations is read from orthogonal relations. By assuming that $ P_{\vec{v}}(x) =x^{|\vec{v}|}+\eta_{|\vec{v}|,|\vec{v}|-1}x^{|\vec{v}|-1}+\cdots+\eta_{|\vec{v}|,0}$, we have
\begin{align}\label{type2}
\left(\begin{array}{ccc}
m_0^{(1)}&\cdots&m_{|\vec{v}|-1}^{(1)}\\
\vdots&&\vdots\\
m_{v_1-1}^{(1)}&\cdots&m_{|\vec{v}|+v_1-2}^{(1)}\\
\vdots&&\vdots\\
m_0^{(p)}&\cdots&m_{|\vec{v}|-1}^{(p)}\\
\vdots&&\vdots\\
m_{v_p-1}^{(p)}&\cdots&m_{|\vec{v}|+v_p-2}^{(p)}\end{array}
\right)\left(\begin{array}{c}
\eta_{|\vec{v}|,0}\\
\vdots\\
\eta_{|\vec{v}|,v_1-1}\\
\vdots\\
\eta_{|\vec{v}|,|\vec{v}|-v_p+1}\\
\vdots\\
\eta_{|\vec{v}|,|\vec{v}|-1}\end{array}
\right)
=-\left(\begin{array}{c}
m_{|\vec{v}|}^{(1)}\\
\vdots\\
m_{|\vec{v}|+v_1-1}^{(1)}\\
\vdots\\
m_{|\vec{v}|}^{(p)}\\
\vdots\\
m_{|\vec{v}|+v_p-1}^{(p)}\end{array}
\right),\quad m_j^{(i)}=\int_\mathbb{R} x^j\omega_i(x)dx.
\end{align}
Similar to the type I case, we say that $ \vec{v} $ is a normal index for type II MOPs if the linear system has a unique solution. By noting that the coefficient matrix in \eqref{type2} is the transpose of that for type I in \eqref{type1}, we know that a multi-index is normal for type II if and only if it is normal for type I.

Moreover, if $u=(u_1,\cdots,u_{p_1})$ and $v=(v_1,\cdots,v_{p_2})$ are two multi-indices, and $\vec{\omega}=(\omega_1,\cdots,\omega_{p_1})$ is a set of weights, let's
define type I function 
\begin{align*}
Q_{\vec{u}}(x)=\sum_{i=1}^{p_1}A_{\vec{u},i}(x)\omega_i(x),
\end{align*}
and type II MOP $P_{\vec{v}}(x)$ with regard to weight $\vec{\omega}$.
Then there is a bi-orthogonality property \cite[Thm. 23.1.6]{ismail09}
\begin{align}\label{mopor}
\int_{\mathbb{R}}P_{\vec{v}}(x)Q_{\vec{u}}(x)dx=\left\{\begin{array}{ll}
0& \text{if $\vec{u}\leq \vec{v}$,}\\
0& \text{if $|\vec{v}|\leq |\vec{u}|-2$,}\\
1& \text{if $|\vec{v}|=|\vec{u}|-1$.}\\
\end{array}
\right.
\end{align}

Except for type I and type II MOPs, a family of mixed MOPs was proposed in the study of non-intersecting Brownian motions \cite{daems07}. Let's consider a non-intersecting Brownian motion on $\mathbb{R}$, with $u_{\alpha}$ paths starting at $a_\alpha\in\mathbb{R}$ $(\alpha=1,\cdots,p_1)$, and with $v_{\beta}$ paths ending at points $b_\beta\in\mathbb{R}$ $(\beta=1,\cdots,p_2)$. Since there is no collision in paths, we require 
\begin{align}\label{length}
\sum_{\alpha=1}^{p_1}u_{\alpha}=\sum_{\beta=1}^{p_2}v_{\beta}.
\end{align} This equation plays an important role in the definition of mixed MOPs and will be explained later.
 Applications of mixed MOPs in recent years were proposed in integrable system and  random walks \cite{daems07,adler09,af11,branquinho21}. What is special about mixed MOPs is that they are orthogonal with two different sets of weights. Assume that $\vec{u}=(u_{1},\cdots,u_{p_1})$ and $\vec{v}=(v_{1},\cdots,v_{p_2})$ are two multi-indices, and $\vec{\omega}_{1}=(\omega_{1,1},\cdots,\omega_{1,p_1})$ and $\vec{\omega}_{2}=(\omega_{2,1},\cdots,\omega_{2,p_2})$ are two sets of weights, then a family of  polynomials $A_{1},\cdots,A_{p_1}$ with deg $A_i\leq u_i-1$ could be defined by orthogonal relations 
\begin{align}\label{or-mt}
\int_{\mathbb{R}} \left(
\sum_{i=1}^{p_1}A_{i}(x)\omega_{1,i}(x)
\right)\omega_{2,j}(x)x^kdx=0,\quad k=0,\cdots,v_{j}-1,\quad j=1,\cdots,p_2.
\end{align}
Polynomials $ A_{1},\cdots,A_{p_1} $ are called MOPs of mixed type since the function 
\begin{align*}
P_{\vec{u},\vec{v}}(x)=\sum_{i=1}^{p_1}A_{i}(x)\omega_{1,i}(x)
\end{align*}
is a linear form of the first set of weights as in type I multiple orthogonality (c.f. \eqref{t1mop}) and has the same type of orthogonality with respect to the second set of weights as in type II multiple orthogonality (c.f. \eqref{t2}). Given another pair of indices $ \vec{u}'=(u_1',\dots,u_{p_1}') $ and $ \vec{v}'=(v_1',\dots,v_{p_2}') $, one can also consider a family of polynomials $ B_1,\dots,B_{p_2} $ with deg $ B_i\leq v_i'-1 $ such that the linear form 
\begin{align*}
Q_{\vec{u}',\vec{v}'}(x)=\sum_{i=1}^{p_2}B_{i}(x)\omega_{2,i}(x)
\end{align*}
satisfy the orthogonal relations
\begin{align}\label{or-mt2}
\int_{\mathbb{R}} x^k\omega_{1,j}(x)Q_{\vec{u}',\vec{v}'}(x)
dx=0,\quad k=0,\cdots,u_{j}'-1,\quad j=1,\cdots,p_1.
\end{align}
As a simple observation, the orthogonality \eqref{or-mt} and \eqref{or-mt2} can be established equivalently by the formula
\begin{align}\label{orePQ}
\int_{\mathbb{R}}P_{\vec{u},\vec{v}}(x)Q_{\vec{u}',\vec{v}'}(x)dx=0\text{ for $\vec{u}\leq\vec{u}'$ or $\vec{v}\geq \vec{v}'$}.
\end{align}
The partial order relation $ \vec{u}\leq\vec{u}' $ means that $ u_i\leq u_i' $ for every $ i \in [1,p_1]$.
If we denote moments $$m_j^{(l,k)}=\int_\mathbb{R} x^j \omega_{1,l}(x)\omega_{2,k}(x)dx$$ and assume that $A_i(x)=\xi_{i,u_{i}-1}x^{u_{i}-1}+\cdots+\xi_{i,0}$, then orthogonal conditions \eqref{or-mt} results in the following linear system
\begin{align*}
\left(\begin{array}{ccccccc}
m_0^{(1,1)}&\cdots&m_{u_{1}-1}^{(1,1)}&\cdots&m_0^{(p_1,1)}&\cdots&m_{u_{p_1-1}}^{(p_1,1)}\\
\vdots&&\vdots&&\vdots&&\vdots\\
m_{v_{1}-1}^{(1,1)}&\cdots&m_{u_{1}+v_{1}-2}^{(1,1)}&\cdots&m_{v_{1}-1}^{(p_1,1)}&\cdots&m_{u_{p_1}+v_{1}-2}^{(p_1,1)}\\
\vdots&&\vdots&&\vdots&&\vdots\\
m_0^{(1,p_2)}&\cdots&m_{u_{1}-1}^{(1,p_2)}&\cdots&m_{0}^{(p_1,p_2)}&\cdots&m_{u_{p_1}-1}^{(p_1,p_2)}\\
\vdots&&\vdots&&\vdots&&\vdots\\
m_{v_{p_2}-1}^{(1,p_2)}&\cdots&m_{u_{1}+v_{p_2}-2}^{(1,p_2)}&\cdots&m_{v_{p_2}-1}^{(p_1,p_2)}&\cdots&m_{u_{p_1}+v_{p_2}-2}^{(p_1,p_2)}\end{array}
\right)\left(\begin{array}{c}
\xi_{1,0}\\\vdots\\\xi_{1,u_{1}-1}\\\vdots\\\xi_{p_1,0}\\\vdots\\\xi_{p_1,u_{p_1}-1}
\end{array}
\right)=0
\end{align*}
with $|\vec{v}|$ equations and $|\vec{u}|$ unknowns. Therefore, to ensure a nonzero solution of the linear system, one needs to assume that $|\vec{u}|=|\vec{v}|+1$ (for $ Q_{\vec{u}',\vec{v}'} $, we require $ |\vec{u}'|+1=|\vec{v}'| $). By solving the linear equations directly using the Cramer's rule, we see that the linear form $ P_{\vec{u},\vec{v}}(x) $ and $ Q_{\vec{u}',\vec{v}'}(x) $ are proportional to determinants
\begin{align*}
&P_{\vec{u},\vec{v}}(x)=\sum_{i=1}^{p_1}A_i(x)\omega_{1,i}(x)\sim \det\left(\begin{array}{ccc}
A^{(1,1)}_{u_1,v_1}&\cdots&A^{(p_1,1)}_{u_{p_1},v_1}\\
\vdots&&\vdots\\
A^{(1,p_2)}_{u_1,v_{p_2}}&\cdots&A^{(p_1,p_2)}_{u_{p_1},v_{p_2}}\\
\psi_1(x)&\cdots&\psi_{p_1}(x)
\end{array}
\right),\\
&Q_{\vec{u}',\vec{v}'}(x)=\sum_{i=1}^{p_1}B_i(x)\omega_{2,i}(x)\sim \det\left(\begin{array}{cccc}
A^{(1,1)}_{u_1',v_1'}&\cdots&A^{(p_1,1)}_{u_{p_1}',v_1'}&\varphi_1(x)\\
\vdots&&\vdots&\vdots\\
A^{(1,p_2)}_{u_1',v_{p_2}'}&\cdots&A^{(p_1,p_2)}_{u_{p_1}',v_{p_2}'}&\varphi_{p_2}(x)\\
\end{array}
\right),
\end{align*}
where 
\begin{align*}
\psi_i(x)=\omega_{1,i}(x)(1,x,\cdots,x^{u_{i}-1}),\quad \varphi_i(x)=\omega_{2,i}(x)(1,x,\cdots,x^{v_{i}'-1})',\quad A^{(a,b)}_{u_i,v_j}=\left(m_{l+k}^{(a,b)}\right)_{\substack{{k=0,\cdots,v_{j}-1}\\
		{l=0,\cdots,u_{i}-1}}}.
		\end{align*} Such formula implies that one can regard the block moment matrix as non-abelian moment matrix. Therefore, MOPs of type I, type II and mixed type are special non-abelian orthogonal polynomials discussed in \cite{af17,li21}.
Moreover, if the polynomials $ \{A_j\}_{j=1}^{p_1} $ and $ \{B_j\}_{j=1}^{p_2} $ are unique up to a multiplicative constant, then we call $ (\vec{u},\vec{v}) $ a normal pair of indices for the sets of weights $ \vec{\omega}_{1} $ and $ \vec{\omega}_{2} $. Therefore, it is always possible to choose a proper normalization to uniquely define MOPs of mixed type with regard to normal pair of indices. 
In agreement with formula \eqref{length}, we require that $|\vec{v}|=|\vec{u}|$, and $P_{\vec{u}+\vec{e}_a,\vec{v}}(x)$ and $Q_{\vec{u},\vec{v}+\vec{e}_b}(x)$ are desired formula satisfying orthonormal condition 
\begin{align*}
\int_{\mathbb{R}}P_{\vec{u}+\vec{e}_a,\vec{v}}(x)Q_{\vec{u},\vec{v}+\vec{e}_b}(x)dx=1.
\end{align*}
In above formula, 
\begin{align*}
\vec{e}_k=(0,\dots,1,\dots,0) \quad \text{where 1 is in the } k \text{th position}
\end{align*}
is the unit vector, and $1\leq a\leq p_1$ and $1\leq b\leq p_2$ are fixed integers.

If we further assume that $P_{\vec{u}+\vec{e}_a,\vec{v}}(x)$ and $Q_{\vec{u},\vec{v}+\vec{e}_b}$ have the same 
coefficients for the term $x^{u_a}\omega_{1,a}(x)$ and $x^{v_b}\omega_{2,b}(x)$, then by solving the linear system, we have
\begin{align*}
&P_{\vec{u},\vec{v}}(x)=\frac{(-1)^{\sum_{i=b+1}^{p_2}v_i}}{c_{\vec{u},\vec{v}}^{(a,b)}}\det\left(\begin{array}{ccccc}
A^{(1,1)}_{u_1,v_1}&\cdots&A_{u_a+1,v_1}^{(a,1)}&\cdots&A^{(p_1,1)}_{u_{p_1},v_1}\\
\vdots&&\vdots&&\vdots\\
A^{(1,p_2)}_{u_1,v_{p_2}}&\cdots&A^{(a,p_2)}_{u_a+1,v_{p_2}}&\cdots&A^{(p_1,p_2)}_{u_{p_1},v_{p_2}}\\
\psi_1(x)&\cdots&\tilde{\psi}_{a}(x)&\cdots&\psi_{p_1}(x)
\end{array}
\right),\\ &Q_{\vec{u},\vec{v}+\vec{e}_b}(y)=\frac{(-1)^{\sum_{j=a+1}^{p_1}u_j}}{c_{\vec{u},\vec{v}}^{(a,b)}}\det\left(\begin{array}{cccccc}
A^{(1,1)}_{u_1,v_1}&\cdots&A^{(p_1,1)}_{u_{p_1},v_1}&\varphi_1(x)\\
\vdots&&\vdots&\vdots\\
A^{(1,b)}_{u_1,v_b+1}&\cdots&A^{(p_1,b)}_{u_{p_1},v_b+1}&\tilde{\varphi}_b(x)\\
\vdots&&\vdots&\vdots\\
A^{(1,p_2)}_{u_1,v_{p_2}}&\cdots&A^{(p_1,p_2)}_{u_{p_1},v_{p_2}}&\varphi_{p_2}(x)\\
\end{array}
\right),
\end{align*}
where $A_{u_i,v_j}^{(a,b)}$ was defined before, 
\begin{align*}
	&\psi_i(x)=\omega_{1,i}(x)(1,x,\cdots,x^{u_{i}-1}),\,\,\,(i\ne a)&\tilde{\psi}_a(x)=\omega_{1,a}(x)(1,x,\cdots,x^{u_{a}}),
\\	
	&\varphi_j(x)=\omega_{2,j}(x)(1,x,\cdots,x^{v_{j}-1})',\,(j\ne b) & \tilde{\varphi}_b(x)=\omega_{2,i}(x)(1,x,\cdots,x^{v_{b}})',
\end{align*}
and
\begin{align*}
c_{\vec{u},\vec{v}}^{(a,b)}=\left(\det\left[\begin{array}{ccc}
A^{(1,1)}_{u_1,v_1}&\cdots&A^{(p_1,1)}_{u_{p_1},v_1}\\
\vdots&&\vdots\\
A^{(1,p_2)}_{u_1,v_{p_2}}&\cdots&A^{(p_1,p_2)}_{u_{p_1},v_{p_2}}
\end{array}
\right]
\det\left[\begin{array}{ccccc}
A^{(1,1)}_{u_1,v_1}&\dots&A^{(a,1)}_{u_a+1,v_1}&\dots&A^{(p_1,1)}_{u_{p_1},v_1}\\
\vdots&&\vdots&&\vdots\\
A_{u_1,v_{b}+1}^{(1,b)}&\dots&A^{(a,b)}_{u_a+1,v_b+1}&\dots&A_{u_{p_1},v_{b}+1}^{(p_1,b)}\\
\vdots&&\vdots&&\vdots\\
A^{(1,p_2)}_{u_1,v_{p_2}}&\dots&A^{(a,p_2)}_{u_a+1,v_{p_2}}&\dots&A^{(p_1,p_2)}_{u_{p_1},v_{p_2}}\end{array}
\right]
\right)^{1/2}.
\end{align*}

\subsection{A bi-orthogonal generalization of MOPs}\label{sec2.2}
This part is devoted to the bi-orthogonal generalization of MOPs with inner product \eqref{bf1}. Let's consider two pairs of different multi-indices $\vec{u}=(u_1,\cdots,u_{p_1})$, $\vec{v}=(v_1,\cdots,v_{p_2})$ and $\vec{u}'=(u_1',\cdots,u_{p_1}')$, $\vec{v}'=(v_1',\cdots,v_{p_2}')$, together with weights $\vec{\omega}_{1}=(\omega_{1,1},\cdots,\omega_{1,p_1})$ and $\vec{\omega}_{2}=(\omega_{2,1},\cdots,\omega_{2,p_2})$ supported on contours $\gamma_1$ and $\gamma_2$ respectively. Then one can introduce a coupling function $$\s(x,y):\gamma_1\times\gamma_2\to\mathbb{R}$$ such that for $1\leq a\leq p_1$ and $1\leq b\leq p_2$, bi-moments
\begin{align*}
m_{k,l}^{(a,b)}=\int_{\gamma_1\times \gamma_2}x^ky^l \s(x,y)\omega_{1,a}(x)\omega_{2,b}(y)dxdy
\end{align*}
exist and are finite. Therefore, we can define polynomials $ \{A_i\}_{i=1}^{p_1} $ together with its counterpart $\{B_j\}_{j=1}^{p_2} $ such that they satisfy the orthogonal relations
\begin{align*}
&\int_{\gamma_1\times\gamma_2}\left(
\sum_{i=1}^{p_1}A_i(x)\omega_{1,i}(x)
\right)\s(x,y)y^k\omega_{2,j}(y)dxdy=0,\quad k=0,\cdots,v_j-1,\quad j=1,\cdots,p_2,\\
&\int_{\gamma_1\times\gamma_2}x^k\omega_{1,j}(x)\s(x,y)\left(
\sum_{i=1}^{p_2}B_i(y)\omega_{2,i}(y)
\right)dxdy=0,\quad k=0,\cdots,u_j'-1,\quad j=1,\cdots,p_1.
\end{align*}
To uniquely determine these multiple bi-orthogonal polynomials (MBOPs), we follow our discussions about MOPs of mixed type, and the formal definition is given below. 
\begin{definition}
	Suppose we have two pairs of multi-indices $\vec{u}=(u_1,\cdots,u_{p_1})$, $\vec{v}=(v_1,\cdots,v_{p_2})$ and $\vec{u}'=(u_1',\cdots,u_{p_1}')$, $\vec{v}'=(v_1',\cdots,v_{p_2}')$ with $|\vec{u}|=|\vec{v}|$ and $ |\vec{u}'|=|\vec{v}'| $, together with two sets of weights $\vec{\omega_{1}}$ and $\vec{\omega_{2}}$, which are supported on contours $\gamma_1$ and $\gamma_2$ respectively. Fix integers $ 1\leq a\leq p_1 $ and $ 1\leq b\leq p_2 $ . If $\s(x,y)$ is a nice enough function from $\gamma_1\times \gamma_2$ to $\mathbb{R}$ so that all moments exist and are finite, then there are unique multiple bi-orthogonal functions 
	\begin{align*}
	P_{\vec{u}+\vec{e}_a,\vec{v}}(x)=\sum_{i=1}^{p_1}A_i(x)\omega_{1,i}(x), \text{ where deg $ A_i(x)\leq u_i-1 (i\ne a) $ and deg $ A_a(x)\leq u_a $},\\
	Q_{\vec{u}',\vec{v}'+\vec{e}_b}(y)=\sum_{i=1}^{p_2}B_i(y)\omega_{2,i}(y), \text{ where deg $ B_i(y)\leq v_i'-1 (i\ne b)$ and deg $ B_b(y)\leq v_b' $}
	\end{align*}
	satisfying multiple orthogonal relations  
	\begin{align}\label{mbops}
	\begin{aligned}
	\int_{\gamma_1\times \gamma_2}P_{\vec{u}+\vec{e}_a,\vec{v}}(x)\s(x,y)Q_{\vec{u}',\vec{v}'+\vec{e}_b}(y)dxdy=\left\{\begin{array}{ll}
	0 & \text{ if } \vec{u}+\vec{e}_a\leq \vec{u}',\\
	0 & \text{ if } \vec{v}\geq \vec{v}'+\vec{e}_b,\\
	1 & \text{ if } \vec{u}=\vec{u}' \text{ and } \vec{v}=\vec{v}'.
	\end{array}
	\right.
	\end{aligned}
	\end{align}
It is required that $P_{\vec{u}+\vec{e}_a,\vec{v}}(x)$ and $Q_{\vec{u},\vec{v}+\vec{e}_b}(y)$ have the same normalization factor. 
\end{definition}

By introducing
\begin{align*}
&\psi_{i}(x)=\omega_{1,i}(x)(1,x,\cdots,x^{u_i-1}), &\tilde{\psi}_{i}(x)=\omega_{1,i}(x)(1,x,\cdots,x^{u_i}), && i=1,\cdots,p_1, \\
&\varphi_i(x)=\omega_{2,i}(x)(1,x,\cdots,x^{v_i-1})', &\tilde{\varphi}_i(x)=\omega_{2,i}(x)(1,x,\cdots,x^{v_i})', && i=1,\cdots,p_2,
\end{align*}
and solving the orthogonal relations \eqref{mbops}, we know that 
\begin{align*}
&P_{\vec{u}+\vec{e}_a,\vec{v}}(x)=\frac{(-1)^{\sum_{i=b+1}^{p_2}v_i}}{c_{\vec{u},\vec{v}}^{(a,b)}}\det\left(\begin{array}{ccccc}
A^{(1,1)}_{u_1,v_1}&\dots&A^{(a,1)}_{u_a+1,v_1}&\dots&A^{(p_1,1)}_{u_{p_1},v_1}\\
\vdots&&\vdots&&\vdots\\
A^{(1,p_2)}_{u_1,v_{p_2}}&\cdots&A^{(a,p_2)}_{u_a+1,v_{p_2}}&\dots&A^{(p_1,p_2)}_{u_{p_1},v_{p_2}}\\
\psi_{1}(x)&\cdots&\tilde{\psi}_{a}(x)&\dots&\psi_{p_1}(x)\end{array}
\right),\\ &Q_{\vec{u},\vec{v}+\vec{e}_b}(y)=\frac{(-1)^{\sum_{j=a+1}^{p_1}u_j}}{c_{\vec{u},\vec{v}}^{(a,b)}}\det\left(\begin{array}{cccc}
A^{(1,1)}_{u_1,v_1}&\cdots&A^{(p_1,1)}_{u_{p_1},v_1}&\varphi_{1}(y)\\
\vdots&&\vdots&\vdots\\
A^{(1,b)}_{u_1,v_b+1}&\cdots&A^{(p_1,b)}_{u_{p_1},v_b+1}&\tilde{\varphi}_{b}(y)\\
\vdots&&\vdots&\vdots\\
A^{(1,p_2)}_{u_1,v_{p_2}}&\cdots&A^{(p_1,p_2)}_{u_{p_1},v_{p_2}}&\varphi_{p_2}(y)\end{array}
\right),
\end{align*}
where $ A_{u_i,v_j}^{(i,j)}=(m_{l,k}^{(i,j)})_{\substack{k=0,\dots,v_j-1\\ l=0,\dots,u_i-1}} $ and
\begin{align*}
c_{\vec{u},\vec{v}}^{(a,b)}=\left(\det\left[\begin{array}{ccc}
A^{(1,1)}_{u_1,v_1}&\cdots&A^{(p_1,1)}_{u_{p_1},v_1}\\
\vdots&&\vdots\\
A^{(1,p_2)}_{u_1,v_{p_2}}&\cdots&A^{(p_1,p_2)}_{u_{p_1},v_{p_2}}
\end{array}
\right]
\det\left[\begin{array}{ccccc}
A^{(1,1)}_{u_1,v_1}&\dots&A^{(a,1)}_{u_a+1,v_1}&\dots&A^{(p_1,1)}_{u_{p_1},v_1}\\
\vdots&&\vdots&&\vdots\\
A_{u_1,v_{b}+1}^{(1,b)}&\dots&A^{(a,b)}_{u_a+1,v_b+1}&\dots&A_{u_{p_1},v_{b}+1}^{(p_1,b)}\\
\vdots&&\vdots&&\vdots\\
A^{(1,p_2)}_{u_1,v_{p_2}}&\dots&A^{(a,p_2)}_{u_a+1,v_{p_2}}&\dots&A^{(p_1,p_2)}_{u_{p_1},v_{p_2}}\end{array}
\right]
\right)^{1/2}.
\end{align*}

\begin{remark}
	when $\vec{u}$ and $\vec{v}$ have only one index, multiple bi-orthogonal polynomials degenerate to normal bi-orthogonal polynomials, which have been well investigated. For example, the case $\s(x,y)=e^{-cxy}$ (where $c$ is a coupling constant) is related to a couple Hermitian matrix model and was studied in \cite{adler99,mehta02}. Moreover, the case $\s(x,y)=(x+y)^{-1}$ gives rise to the so-called Cauchy bi-orthogonal polynomials, which has attracted attention in different fields like random matrix, integrable systems and approximation theory \cite{lundmark05,bertola09,bertola10,li19}. 
\end{remark}

\subsection{Multiple symmetric bi-orthogonal polynomials and multiple skew-orthogonal polynomials}\label{sec2.3}
In this part, we prepare to give a definition of multiple skew-orthogonal polynomials. Due to the difficulties in skew-orthogonality, we firstly take a look at the multiple symmetric bi-orthogonal polynomials and then move to the skew-symmetric case. 
\subsubsection{Multiple symmetric bi-orthogonal polynomials}
In the symmetric case, we need to assume that multi-indices $\vec{u}$ and $\vec{v}$ as well as weights $\vec{\omega}_1$ and $\vec{\omega}_2$ are the same, that is, we have only one multiple index $\vec{v}=(v_1,\cdots,v_{p})$ and one family of weights $(\omega_{1},\cdots,\omega_{p})$ supported on $\gamma$. Moreover, the coupling function $\s(x,y):\gamma\times\gamma\to \mathbb{R}$ is a symmetric function, i.e. $\s(x,y)=\s(y,x)$. Therefore, moments under this setting could be written as 
\begin{align*}
m_{k,l}^{(i,j)}:=\int_{\gamma\times\gamma} x^ky^l\s(x,y)\omega_i(x)\omega_j(y)dxdy,
\end{align*}
and obviously $m_{k,l}^{(i,j)}=m_{l,k}^{(j,i)}$. Let $b\in\mathbb{Z}$ and $ 1\leq b\leq p $, we have a sequence of symmetric MBOPs $\{A_i(x)\}_{i=1}^p$, where deg $A_i\leq v_i-1$ $(i=1,\cdots,p, i\neq b)$ and deg $A_b\leq v_b$, such that corresponding linear form $ P_{\vec{v}}(x)=\sum_{i=1}^{p}A_i(x)\omega_i(x) $ satisfy the orthogonal relation
\begin{align}\label{smbops}
\int_{\gamma\times\gamma}P_{\vec{v}}(x)\s(x,y)P_{\vec{v}'}(y)dxdy=\left\{\begin{array}{ll}
0 & \text{ if $ \vec{v}+\vec{e}_b\leq\vec{v}' $ or $ \vec{v}\geq\vec{v}'+\vec{e}_b $},\\
1 & \text{ if $ \vec{v}=\vec{v}' $}.
\end{array}
\right.
\end{align}
In order to solve the relations, it is useful to write the following equivalent form
\begin{subequations}\label{sop}
	\begin{align}
	&\int_{\gamma\times\gamma}\left(
	\sum_{i=1}^{p}A_i(x)\omega_i(x)
	\right)\s(x,y)y^k\omega_j(y)dxdy=0,\quad k=0,\cdots,v_j-1,\quad j=1,\cdots,p\label{sop1}\\
	&\int_{\gamma\times\gamma}\left(
	\sum_{i=1}^{p}A_i(x)\omega_i(x)
	\right)\s(x,y)y^{v_b}\omega_b(y)dxdy=h_{\vec{v}}^{(b)}\ne0.\label{sop2}
	\end{align}
\end{subequations}
If we assume that $A_i(x)=a_{i,v_i-1}x^{v_i-1}+\cdots+a_{i,0}$ $(i\neq b, 1\leq i\leq p)$ and $A_b(x)=a_{b,v_b}x^{v_b}+\cdots+a_{b,0}$, then the above linear system is equivalent to
\begin{align}\label{coef}
\left(\begin{array}{ccccc}
A_{v_1,v_1}^{(1,1)}&\cdots&A_{v_b+1,v_1}^{(b,1)}&\cdots&A_{v_p,v_1}^{(p,1)}\\
\vdots&&\vdots&&\vdots\\
A_{v_1,v_b+1}^{(1,b)}&\cdots&A_{v_b+1,v_b+1}^{(b,b)}&\cdots&A_{v_p,v_b+1}^{(p,b)}\\
\vdots&&\vdots&&\vdots\\
A_{v_1,v_p}^{(1,p)}&\cdots&A_{v_b+1,v_p}^{(b,p)}&\cdots&A_{v_p,v_p}^{(p,p)}\end{array}
\right)\left(\begin{array}{c}
\alpha^{(1)}\\\vdots\\\alpha^{(b)}\\\vdots\\\alpha^{(p)}\end{array}
\right)=\left(\begin{array}{c}
0\\\vdots\\{\vec{e}_b}^\top\\\vdots\\0
\end{array}
\right),
\end{align}
where 
\begin{align*}
\alpha^{(i)}=(a_{i,0},\cdots,a_{i,v_i-1})',\,(i\neq b, 1\leq i\leq p)\quad \alpha^{(b)}=(a_{b,0},\cdots,a_{b,v_b})'
\end{align*} and ${\vec{e}_b}^\top$ is the transpose of $\vec{e}_b$. Therefore, we can obtain the following determinant form
\begin{align*}
P_{\vec{v}}(x)=\sum_{i=1}^p A_i(x)\omega_i(x)=\frac{(-1)^{\sum_{i=b+1}^{p}v_i}}{c_{\vec{v}}^{(b)}}\det\left(\begin{array}{ccccc}
A_{v_1,v_1}^{(1,1)}&\cdots&A_{v_b+1,v_1}^{(b,1)}&\cdots&A_{v_p,v_1}^{(p,1)}\\
\vdots&&\vdots&&\vdots\\
A_{v_1,v_p}^{(1,p)}&\cdots&A_{v_b+1,v_p}^{(b,p)}&\cdots&A_{v_p,v_p}^{(p,p)}\\
\psi_1(x)&\cdots&\psi_b(x)&\cdots&\psi_p(x)
\end{array}
\right),
\end{align*}
where $\psi_i(x)=\omega_i(x)(1,\cdots,x^{v_i-1})$ $(i\neq b, 1\leq i\leq p)$ and $\psi_b(x)=\omega_b(x)(1,\cdots,x^{v_b})$. If we denote 
\begin{align*}
\tau_{(v_1,\cdots,v_p)}=\det\left(\begin{array}{ccc}
A_{v_1,v_1}^{(1,1)}&\cdots&A_{v_p,v_1}^{(p,1)}\\
\vdots&&\vdots\\
A_{v_1,v_p}^{(1,p)}&\cdots&A_{v_p,v_p}^{(p,p)}
\end{array}
\right),\quad A_{\alpha,\beta}^{(i,j)}=A_{\beta,\alpha}^{(j,i)}.
\end{align*}
then we have $c_{\vec{v}}^{(b)}=(\tau_{(v_1,\cdots,v_p)}\tau_{(v_1,\cdots,v_b+1,\cdots,v_p)})^{1/2}$ and $h_{\vec{v}}^{(b)}$ in \eqref{sop2} could be expressed by $(\tau_{(v_1,\cdots,v_b+1,\cdots,v_p)}/\tau_{(v_1,\cdots,v_p)})^{1/2}$.  
\subsubsection{Multiple skew-orthogonal polynomials}
Let's consider a skew-symmetric kernel $\s(x,y)=-\s(y,x)$ such that 
\begin{align*}
m_{k,l}^{(a,b)}:=\int_{\gamma\times\gamma} x^ky^l\s(x,y)\omega_a(x)\omega_b(y)dxdy=-\int_{\gamma\times\gamma}x^ly^k\s(x,y)\omega_b(x)\omega_a(y)dxdy=-m_{l,k}^{(b,a)}.
\end{align*}
Then for a multi-index $\vec{v}=(v_1,\cdots,v_p)$ and a sequence of weights $(\omega_1,\cdots,\omega_p)$, we can define corresponding polynomials $(R_1(x),\cdots,R_p(x))$ where deg $R_i\leq v_i-1$ ($i=1,\cdots,p$). Since our primary consideration is to seek for the linear form $\sum_{i=1}^p R_i(x)\omega_i(x)$, which is simultaneously skew orthogonal with respect to several weights, we first consider the multiple skew orthogonal relations 
\begin{align}\label{eq2.14}
\int_{\gamma\times\gamma} \left(
\sum_{i=1}^p R_i(x)\omega_i(x)
\right)\s(x,y)y^k\omega_j(y)dxdy=0,\quad k=0,\cdots,v_j-1,\quad j=1,\cdots,p.
\end{align}

If we denote $R_i(x)=a_{i,v_i-1}x^{v_i-1}+\cdots+a_{i,0}$, then equation \eqref{eq2.14} implies
\begin{align*}
\left(\begin{array}{ccc}
A_{v_1,v_1}^{(1,1)}&\cdots&A_{v_p,v_1}^{(p,1)}\\
\vdots&&\vdots\\
A_{v_1,v_p}^{(1,p)}&\cdots&A_{v_p,v_p}^{(p,p)}
\end{array}
\right)\left(\begin{array}{c}
\alpha^{(1)}\\\vdots\\\alpha^{(p)}
\end{array}
\right)=0,
\end{align*}
where $ A_{u_i,v_j}^{(i,j)}=(m_{l,k}^{(i,j)})_{\substack{k=0,\dots,v_j-1\\ l=0,\dots,u_i-1}} $ and $ \alpha^{(i)}=(a_{i,0},\cdots,a_{i,v_i-1})' $.
Since the matrix is skew symmetric, one knows that non-trivial solutions for $\alpha^{(i)}$ always exist when $v_1+\cdots+v_p$ is odd (i.e. $|\vec{v}|$ is odd). Therefore, it is a key observation that MSOPs are valid only for $|\vec{v}|$ being odd. The normalization condition is then given by
\begin{align*}
\int_{\gamma\times\gamma}\left(\sum_{i=1}^p R_i(x)\omega_i(x)
\right)\s(x,y)y^{v_b}\omega_b(y)dxdy=h_{\vec{v}}^{(b)}\ne0,
\end{align*}
where $ b $ is a fixed integer between $1$ and $p$.
To conclude, we have the following definition for multiple skew-orthogonal polynomials.
\begin{definition}\label{msopdef}
	Given a multi-index $\vec{v}=(v_1,\cdots,v_p)$ such that $|\vec{v}|=v_1+\cdots+v_p$ is odd. If there are $p$ different weights $(\omega_1,\cdots,\omega_p)$ supported on $\gamma$ and $\s(x,y)$ is a skew-symmetric function from $\gamma\times\gamma$ to $\mathbb{R}$ so that all moments are finite, then for a fixed integer $b\in[1,p]$, there exist multiple skew orthogonal polynomials $R_1(x),\cdots,R_p(x)$ and $\tilde{R}_b(x)$, such that
	\begin{align}\label{msop}
	\begin{aligned}
	&\int_{\gamma\times\gamma}\left(
	\sum_{i=1}^{p}R_i(x)\omega_i(x)
	\right)\s(x,y)y^j\omega_k(y)dxdy=0,\quad j=0,\cdots,v_k-1,\quad k=1,\cdots,p,\\
	&\int_{\gamma\times\gamma}\left(
	\sum_{i=1}^{p}R_i(x)\omega_i(x)
	\right)\s(x,y)\left(
	\sum_{\substack{i=1\\i\ne b}}^{p}R_i(y)\omega_i(y)+\tilde{R}_b(y)\omega_b(y)
	\right)dxdy=1,
	\end{aligned}
	\end{align}
	where deg $R_i(x)\leq v_i-1$ $(i=1,\cdots,p)$, and deg $\tilde{R}_b(x)\leq v_b$. Here we assume that coefficients in the highest order terms of $R_b$ and $\tilde{R}_b$ are the same.
\end{definition}
\begin{remark}\label{rem1}
	Different from the orthogonal relations for symmetric MBOPs \eqref{smbops}, the skew inner product of MSOPs and itself is equal to zero, i.e. 
\begin{align*}
\int_{\gamma\times \gamma} \left(\sum_{i=1}^{p}R_i(x)\omega_i(x)\right)\s(x,y)\left(\sum_{i=1}^{p}R_i(y)\omega_i(y)
\right)dxdy=0.
\end{align*}	
Therefore, the skew-orthogonality is not affected by the scaling  $\tilde{R}_b(y)\to\tilde{R}_b(y)+\alpha R_b(y)$ for all $\alpha\in\mathbb{R}$. Therefore, for later convenience, we denote $(R_1(x),\cdots,R_p(x),\tilde{R}_b(x))$ as a family of multiple skew-orthogonal polynomials, and set the coefficient of $x^{v_b-1}\omega_b(x)$ in $\sum_{i=1}^pR_i(x)\omega_i(x)+\tilde{R}_b(x)\omega_b(x)$ as $0$. 
\end{remark}
By assuming that coefficients in the highest order terms of $R_b(x)$ and $\tilde{R}_b(x)$ are the same, equation \eqref{msop} has a unique solution and we have
\begin{subequations}
	\begin{align}\label{det1}
	R_{\vec{v}}^{(b)}(x)&:=\sum_{i=1}^pR_i(x)\omega_i(x)={\frac{(-1)^{\sum_{i=b+1}^{p}v_i}}{c_{\vec{v}}^{(b)}}}
	\det\left(\begin{array}{ccc}
	A_{v_1,v_1}^{(1,1)}&\cdots&A_{v_p,v_1}^{(p,1)}\\
	\vdots&&\vdots\\
	A_{v_1,v_b-1}^{(1,b)}&\cdots&A_{v_p,v_b-1}^{(p,b)}\\
	\vdots&&\vdots\\
	A_{v_1,v_p}^{(1,p)}&\cdots&A_{v_p,v_p}^{(p,p)}\\
	\psi_1(x)&\cdots&\psi_p(x)
	\end{array}
	\right),\\\label{det2}
	\tilde{R}_{\vec{v}}^{(b)}(x)&:=\sum_{i=1}^{p}R_i(y)\omega_i(y)+\tilde{R}_b(y)\omega_b(y)\\
	&={\frac{1}{c_{\vec{v}}^{(b)}}}
	\det\left(\begin{array}{cccccc}
	A_{v_1,v_1}^{(1,1)}&\cdots&A_{v_b-1,v_1}^{(b,1)}&\cdots&A_{v_p,v_1}^{(p,1)}&\psi_1(y)'\\
	\vdots&&\vdots&&\vdots&\vdots\\
	A_{v_1,v_b-1}^{(1,b)}&\cdots&A_{v_b-1,v_b-1}^{(b,b)}&\cdots&A_{v_p,v_b-1}^{(p,b)}&\tilde{\psi}_b(y)'\\
	\vdots&&\vdots&&\vdots&\vdots\\
	A_{v_1,v_p}^{(1,p)}&\cdots&A_{v_b-1,v_p}^{(b,p)}&\cdots&A_{v_p,v_p}^{(p,p)}&\psi_p(y)'\\
	M_{v_1,v_b}^{(1,b)}&\cdots&M_{v_b-1,v_b}^{(b,b)}&\cdots&M_{v_p,v_b}^{(p,b)}&y^{v_b}\omega_b(y)
	\end{array}
	\right),
	\end{align}
\end{subequations}
where
\begin{align*}
\psi_i(x)=\omega_i(x)(1,\cdots,x^{v_i-1}),\,(i=1,\cdots,p),\quad\tilde{\psi}_b(x)=\omega_b(x)(1,\cdots,x^{v_b-2}), \quad M_{v_i,v_j}^{(k,l)}=(m_{i,v_j}^{(k,l)})_{i=0}^{v_i-1}
\end{align*}
and the normalization factor $ c^{(b)}_{\vec{v}} $ is given by
\begin{align}\label{det3}
(c^{(b)}_{\vec{v}})^{2}=\det\left(\begin{array}{ccc}
A_{v_1,v_1}^{(1,1)}&\cdots&A_{v_p,v_1}^{(p,1)}\\
\vdots&&\vdots\\
A_{v_1,v_b-1}^{(1,b)}&\cdots&A_{v_p,v_b-1}^{(p,b)}\\
M_{v_1,v_b}^{(1,b)}&\cdots&M_{v_p,v_b}^{(p,b)}\\
A_{v_1,v_{b+1}}^{(1,b+1)}&\cdots&A_{v_p,v_{b+1}}^{(p,b+1)}\\
\vdots&&\vdots\\
A_{v_1,v_p}^{(1,p)}&\cdots&A_{v_p,v_p}^{(p,p)}\end{array}\right)
\det\left(\begin{array}{ccccc}
A_{v_1,v_1}^{(1,1)}&\cdots&A_{v_b-1,v_1}^{(b,1)}&\cdots&A_{v_p,v_1}^{(p,1)}\\
\vdots&&\vdots&&\vdots\\
A_{v_1,v_b-1}^{(1,b)}&\cdots&A_{v_b-1,v_b-1}^{(b,b)}&\cdots&A_{v_p,v_b-1}^{(p,b)}\\
\vdots&&\vdots&&\vdots\\
A_{v_1,v_p}^{(1,p)}&\cdots&A_{v_b-1,v_p}^{(b,p)}&\cdots&A_{v_p,v_p}^{(p,p)}
\end{array}
\right).
\end{align}

\section{Pfaffian form of multiple skew-orthogonal polynomials}\label{sec3}

As is known, skew-orthogonal polynomials have Pfaffian expressions which are widely used in integrable systems in terms of Pfaffian tau-functions \cite{adler992,adler02,kodama10}.
In this section we plan to express MSOPs by Pfaffian, and investigate its evolution when time parameters are introduced. The 2-component case has to be a primary consideration since the multiple-component case can be easily generalized from the 2-component case. 
Therefore, from now on, we assume the index set $v=(v_1,v_2)$ and $v_1+v_2$ is odd. 

\subsection{Pfaffian expressions for MSOPs}
First, we note that determinant expressions in \eqref{det1}, \eqref{det2} and \eqref{det3} can be alternatively written in terms of Pfaffians, which is stated as follows.
\begin{proposition}\label{prop3.1}
For multiple skew orthogonal polynomials $(R_1(x),R_2(x),\tilde{R}_2(x))$, we have
\begin{subequations}
\begin{align}\label{pf1}
&R_{(v_1,v_2)}^{(2)}(x):=R_1(x)\omega_1(x)+R_2(x)\omega_2(x)={\frac{1}{d^{(2)}_{\vec{v}}}}\Pf\left(\begin{array}{ccc}
A_{v_1,v_1}^{(1,1)}&A_{v_2,v_1}^{(2,1)}&-\psi_1(x)\\
A_{v_1,v_2}^{(1,2)}&A_{v_2,v_2}^{(2,2)}&-\psi_2(x)\\
\psi_1(x)&\psi_2(x)&0\end{array}
\right),\\
&\tilde{R}_{(v_1,v_2)}^{(2)}(x)=R_1(x)\omega_1(x)+R_2(x)\omega_2(x)+\tilde{R}_2(x)\omega_2(x)\nonumber\\
&\qquad\qquad={\frac{1}{d^{(2)}_{\vec{v}}}}\Pf\left(\begin{array}{cccc}
A_{v_1,v_1}^{(1,1)}&A_{v_2-1,v_1}^{(2,1)}&(M_{v_2,v_1}^{(2,1)})^\top&-\psi_1(x)\\
A_{v_1,v_2-1}^{(1,2)}&A_{v_2-1,v_2-1}^{(2,2)}&(M_{v_2,v_2-1}^{(2,2)})^\top&-\tilde{\psi}_2(x)\\
M_{v_1,v_2}^{(1,2)}&M_{v_2-1,v_2}^{(2,2)}&0&-x^{v_2}\omega_2(x)\\
\psi_1(x)&\tilde{\psi}_2(x)&x^{v_2}\omega_2(x)&0\label{pf2}
\end{array}
\right),
\end{align}
\end{subequations}
with the normalization factor
\begin{align*}
d^{(2)}_{\vec{v}}=\left(
\Pf\left(\begin{array}{cc}
A_{v_1,v_1}^{(1,1)}&A_{v_2-1,v_1}^{(2,1)}\\
A_{v_1,v_2-1}^{(1,2)}&A_{v_2-1,v_2-1}^{(2,2)}
\end{array}
\right)
\Pf\left(\begin{array}{cc}
A_{v_1,v_1}^{(1,1)}&A_{v_2+1,v_1}^{(2,1)}\\
A_{v_1,v_2+1}^{(1,2)}&A_{v_2+1,v_2+1}^{(2,2)}
\end{array}
\right)
\right)^{1/2}.
\end{align*}
\end{proposition}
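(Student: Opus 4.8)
The plan is to read both the determinantal and Pfaffian expressions as two closed forms for the \emph{same} linear form $\sum_i R_i(x)\omega_i(x)$, and to reconcile the normalisations afterwards. The starting observation is that the skew-symmetry $m_{k,l}^{(a,b)}=-m_{l,k}^{(b,a)}$ makes the full block bimoment matrix
\[
\mathcal{M}=\begin{pmatrix} A_{v_1,v_1}^{(1,1)} & A_{v_2,v_1}^{(2,1)}\\ A_{v_1,v_2}^{(1,2)} & A_{v_2,v_2}^{(2,2)}\end{pmatrix}
\]
skew-symmetric. Since $|\vec v|=v_1+v_2$ is odd, $\mathcal{M}$ has odd size and is therefore singular, whereas every matrix obtained from $\mathcal{M}$ by deleting one index, or by bordering it with one extra row and column, has even size; hence all the Pfaffians appearing in \eqref{pf1}, \eqref{pf2} and in $d^{(2)}_{\vec v}$ are well defined. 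A direct check on the border blocks $\mp\psi_i(x)$ confirms that the matrix in \eqref{pf1} is itself skew-symmetric.

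First I would prove \eqref{pf1}. By \eqref{eq2.14}, the coefficient vector $\alpha=(\alpha^{(1)},\alpha^{(2)})$ of $R_{(v_1,v_2)}^{(2)}$ lies in the kernel of $\mathcal{M}$; for an odd skew-symmetric matrix this kernel is spanned by the vector of signed sub-Pfaffians $p_i=(-1)^{i+1}\Pf(\mathcal{M}_{\hat i})$, where $\mathcal{M}_{\hat i}$ deletes row and column $i$, this being the classical identity $\mathcal{M}p=0$. Thus $\sum_i R_i(x)\omega_i(x)=\alpha\cdot\Psi(x)$ is proportional to $\sum_i p_i\,[\Psi(x)]_i$, where $\Psi(x)$ collects the entries $\omega_i(x)x^{l}$ of $\psi_1(x),\psi_2(x)$. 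On the other hand, expanding the Pfaffian in \eqref{pf1} along its last column by the bordered-Pfaffian (Laplace) expansion yields exactly $\sum_i(\mp[\Psi(x)]_i)\,p_i$, so the Pfaffian in \eqref{pf1} is a constant multiple of $\sum_i R_i(x)\omega_i(x)$. This identifies the Pfaffian form with the determinant form \eqref{det1} up to a scalar, since \eqref{det1} is the same kernel element written by Cramer's rule after dropping the last orthogonality condition for $\omega_2$ so as to make the truncated system nonsingular.

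It remains to match normalisations, i.e.\ to show $\tfrac{1}{d^{(2)}_{\vec v}}\Pf=\tfrac{1}{c^{(2)}_{\vec v}}\det$ with $c^{(2)}_{\vec v}$ from \eqref{det3}. Here I would use $\det S=\Pf(S)^2$ for an even skew-symmetric $S$: the block matrices truncated (respectively extended) in the $\omega_2$-sector that sit inside $c^{(2)}_{\vec v}$ are even-dimensional and skew-symmetric, so each determinant in \eqref{det3} is the square of the corresponding Pfaffian occurring in $d^{(2)}_{\vec v}$. A degree count in the moments then forces the leftover determinant factor to equal the product of the truncated and the extended Pfaffians, which reduces the ratio $c^{(2)}_{\vec v}/d^{(2)}_{\vec v}$ to precisely the scalar cancelling the constant found above. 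Finally, \eqref{pf2} follows from the same argument applied to the enlarged skew-symmetric matrix with a double border: the $M$-row and $M$-column encode the extra $v_2$-th moment used in the normalisation, and expanding the Pfaffian along the last two rows and columns reproduces the $2\times2$ cofactor combination defining $\tilde R_{(v_1,v_2)}^{(2)}$.

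The step I expect to be the main obstacle is the sign and degree bookkeeping: keeping the signs in the bordered-Pfaffian expansion consistent with those in the Laplace expansion of \eqref{det1}, and verifying that the leftover determinant in \eqref{det3} factorises as the product of the two Pfaffians of $d^{(2)}_{\vec v}$ (a product-of-Pfaffians equals determinant identity). The double-border case \eqref{pf2} is conceptually identical but heavier, as one must track a $2\times2$ block of borders together with the placement of the $M$-entries and of the corner $x^{v_2}\omega_2(x)$.
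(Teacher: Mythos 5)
Your first step is sound and is genuinely different from the paper's route: you identify the coefficient vector of the linear form with the (generically one-dimensional) kernel of the odd-order skew-symmetric moment matrix, which is spanned by the signed sub-Pfaffians, and you match this against the last-column expansion of the Pfaffian in \eqref{pf1}. The paper instead gets this directly from the Desnanot--Jacobi identity applied to the bordered matrix. However, your argument only yields \emph{proportionality}, with a proportionality factor that is a priori a rational function of the moments; the actual content of the proposition is the exact prefactor $1/d^{(2)}_{\vec{v}}$, and this is where your proposal breaks down.

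The normalization step has two concrete defects. First, the claim that ``each determinant in \eqref{det3} is the square of the corresponding Pfaffian occurring in $d^{(2)}_{\vec{v}}$'' is false: for $b=2$ the first determinant in \eqref{det3} has second-block row indices $0,\dots,v_2-2,v_2$ but column indices $0,\dots,v_2-1$, so it is \emph{not} skew-symmetric; it equals the product $\tau_{(v_1,v_2-1)}\tau_{(v_1,v_2+1)}$ of two \emph{distinct} Pfaffians, not a square. Second, that product factorization --- which you correctly isolate as the ``leftover'' identity and yourself flag as the main obstacle --- cannot be ``forced by a degree count in the moments'': agreement of homogeneity degrees is far weaker than equality of two specific polynomials in the bimoments $m^{(k,l)}_{i,j}$. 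The same objection applies to the companion identity needed to pin down the scalar from your first step, namely that the determinant in \eqref{det1} equals $\tau_{(v_1,v_2-1)}$ times the Pfaffian in \eqref{pf1}. These two identities are precisely what the paper's proof supplies: it applies the Jacobi identity to an enlarged skew-symmetric matrix, uses that the determinant of the odd-order skew-symmetric minor vanishes to kill one term, and thereby factors each non-symmetric determinant as a product of two Pfaffians. Without this ingredient (or an equivalent one --- e.g.\ fixing the scalar by comparing the coefficient of $x^{v_2-1}\omega_2(x)$ on both sides, whose cofactor in \eqref{det1} is the determinant of an even-order skew-symmetric matrix and hence $\tau_{(v_1,v_2-1)}^2$, together with the Jacobi identity for \eqref{det3}), the prefactor $1/d^{(2)}_{\vec{v}}$ is not established, and the sketch for \eqref{pf2} inherits exactly the same gap.
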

\begin{proof}
Here we give a clear explanation for formula \eqref{det1}, while \eqref{det2} can be similarly verified. By applying Jacobi determinant identity \footnote{The Jacobi identity, also referred to as the Desnanot-Jacobi identity, is applied for an arbitrary matrix $M=(m_{i,j})_{i,j=1}^N$
\begin{align*}
|M|\times |M^{a,b}_{c,d}|=|M^{a}_c|\times |M_d^b|-|M_a^d|\times |M_c^b|,
\end{align*}
where $|M_{i_1,\cdots,i_r}^{j_1,\cdots,j_r}|$ stands for the determinant of the the matrix obtained from $M$ by deleting its $(i_1,\cdots,i_r)$-th rows and $(j_1,\cdots,j_r)$-th columns.} to 
\begin{align*}
\left(
\begin{array}{ccccccc}
m_{0,0}^{(1,1)}&\cdots&m_{v_1-1,0}^{(1,1)}&m_{0,0}^{(2,1)}&\cdots&m_{v_2-1,0}^{(2,1)}&-\omega_1(x)\\
\vdots&&\vdots&\vdots&&\vdots&\vdots\\
m_{0,v_1-1}^{(1,1)}&\cdots&m_{v_1-1,v_1-1}^{(1,1)}&m_{0,v_2-1}^{(1,2)}&\cdots&m_{v_2-1,v_1-1}^{(2,1)}&-x^{v_1-1}\omega_1(x)\\
m_{0,0}^{(1,2)}&\cdots&m_{v_1-1,0}^{(1,2)}&m_{0,0}^{(2,2)}&\cdots&m_{v_2-1,0}^{(2,2)}&-\omega_2(x)\\
\vdots&&\vdots&\vdots&&\vdots&\vdots\\
m_{0,v_2-1}^{(1,2)}&\cdots&m_{v_1-1,v_2-1}^{(1,2)}&m_{0,v_2-1}^{(2,2)}&\cdots&m_{v_2-1,v_2-1}^{(2,2)}&-x^{v_2-1}\omega_2(x)\\
\omega_1(x)&\cdots&x^{v_1-1}\omega_1(x)&\omega_2(x)&\cdots&x^{v_2-1}\omega_2(x)&0
\end{array}\right)
\end{align*}
for last two rows and columns, and noting that the determinant of  an odd-order skew-symmetric matrix is zero, we obtain
\begin{align*}
R_{(v_1,v_2)}^{(2)}(x)={\frac{1}{c^{(2)}_{(v_1,v_2)}}}
\Pf\left(\begin{array}{cc}
A_{v_1,v_1}^{(1,1)}&A_{v_2-1,v_1}^{(2,1)}\\
A_{v_1,v_2-1}^{(1,2)}&A_{v_2-1,v_2-1}^{(2,2)}\end{array}
\right)
\Pf\left(\begin{array}{ccc}
A_{v_1,v_1}^{(1,1)}&A_{v_2,v_1}^{(2,1)}&-\psi_1(x)\\
A_{v_1,v_2}^{(1,2)}&A_{v_2,v_2}^{(2,2)}&-\psi_2(x)\\
\psi_1(x)&\psi_2(x)&0\end{array}
\right)
\end{align*}
Moreover, by applying determinant identity to $c^{(2)}_{(v_1,v_2)}$ in \eqref{det3} for the first determinant, we have
\begin{align*}
c^{(2)}_{(v_1,v_2)}=\Pf\left(\begin{array}{cc}
A_{v_1,v_1}^{(1,1)}&A_{v_2-1,v_1}^{(2,1)}\\
A_{v_1,v_2-1}^{(1,2)}&A_{v_2-1,v_2-1}^{(2,2)}\end{array}
\right)^{3/2}
\Pf\left(\begin{array}{cc}
A_{v_1,v_1}^{(1,1)}&A_{v_2+1,v_1}^{(2,1)}\\
A_{v_1,v_2+1}^{(1,2)}&A_{v_2+1,v_2+1}^{(2,2)}\end{array}
\right)
^{1/2},
\end{align*}
and thus the proof is complete.
\end{proof}

Therefore, one can use Hirota's Pfaffian notations \cite{hirota04} to make these expressions more compact. 
If we denote 
\begin{align*}
\pf(i^{(k)}, j^{(l)})=m_{i,j}^{(k,l)}, \quad
\pf(i^{(k)},x)=\omega_k(x)x^i,\quad
 (k,l=1,2),
\end{align*}
then equations \eqref{pf1} and \eqref{pf2} could be equivalently expressed by
\begin{align}\label{msop1}
\begin{aligned}
&R^{(2)}_{(v_1,v_2)}(x)={\frac{1}{d^{(2)}_{(v_1,v_2)}}}\Pf(0^{(1)},\cdots,v_1-1^{(1)},0^{(2)},\cdots,v_2-1^{(2)},x),\\
&\tilde{R}^{(2)}_{(v_1,v_2)}(x)={\frac{1}{d^{(2)}_{(v_1,v_2)}}}\Pf(0^{(1)},\cdots,v_1-1^{(1)},0^{(2)},\cdots,v_2-2^{(2)},v_2^{(2)},x),
\end{aligned}
\end{align}
where $d^{(2)}_{(v_1,v_2)}=(\tau_{(v_1,v_2-1)}\tau_{(v_1,v_2+1)})^{1/2}$ and $$\tau_{(v_1,v_2-1)}=\Pf(0^{(1)},\cdots,v_1-1^{(1)},0^{(2)},\cdots,v_2-2^{(2)}).$$
According to our discussions in last section, there should be another family of multiple skew orthogonal polynomials $(R_{(v_1,v_2)}^{(1)}(x),\tilde{R}_{(v_1,v_2)}^{(1)}(x))$ such that
\begin{align}\label{msop2}
\begin{aligned}
&R_{(v_1,v_2)}^{(1)}(x)=\frac{1}{d_{(v_1,v_2)}^{(1)}}\Pf(0^{(1)},\cdots,v_1-1^{(1)},0^{(2)},\cdots,v_2-1^{(2)},x),\\
&\tilde{R}_{(v_1,v_2)}^{(1)}(x)=\frac{1}{d_{(v_1,v_2)}^{(1)}}\Pf(0^{(1)},\cdots,v_1-2^{(1)},v_1^{(1)},0^{(2)},\cdots,v_2-1^{(2)},x),
\end{aligned}
\end{align}
where $d_{(v_1,v_2)}^{(1)}=\left(
\tau_{(v_1-1,v_2)}\tau_{(v_1+1,v_2)}
\right)^{1/2}$. Moreover, from \eqref{msop1} and \eqref{msop2}, one knows that $R_{(v_1,v_2)}^{(1)}(x)$ and $R_{(v_1,v_2)}^{(2)}(x)$ are the same up to a normalization factor. 

By using Pfaffian notations, the skew orthogonal relations given by Definition \ref{msopdef} have the following equivalent descriptions.

\begin{proposition}\label{prop3.2}
$R_{(v_1,v_2)}^{(1)}(x)$  and $ R_{(v_1,v_2)}^{(2)}(x)$ are simultaneously skew orthogonal with $\tilde{R}_{(v_1,v_2)}^{(1)}(x)$ and $\tilde{R}_{(v_1,v_2)}^{(2)}(x)$, i.e. 
\begin{subequations}
\begin{align}
&\langle R_{(v_1,v_2)}^{(1)}(x),R_{(u_1,u_2)}^{(1)}(y)\rangle=0,\label{3.5a}\\
& \langle R_{(v_1,v_2)}^{(1)}(x),\tilde{R}_{(u_1,u_2)}^{(1)}(y)\rangle=\left\{\begin{array}{ll}
0,&\text{ if $u_1<v_1$ and $u_2\leq v_2$},\\
1,&\text{ if $u_1=v_1$ and $u_2=v_2$},\\
\end{array}\right.\label{3.5b}\\
& \langle R_{(v_1,v_2)}^{(1)}(x),\tilde{R}_{(u_1,u_2)}^{(2)}(y)\rangle=\left\{\begin{array}{ll}
0,&\text{ if $u_1\leq v_1$ and $u_2< v_2$},\\
d_{(v_1,v_2)}^{(2)}/d_{(v_1,v_2)}^{(1)},&\text{ if $u_1=v_1$ and $u_2=v_2$}.\label{3.5c}\\
\end{array}\right.
\end{align}
\end{subequations}
\end{proposition}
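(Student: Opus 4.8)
The plan is to reduce every skew inner product in \eqref{3.5a}--\eqref{3.5c} to the fundamental orthogonality of Definition \ref{msopdef} by expanding the \emph{second} argument along its free entry. The key observation is that, since $R_{(v_1,v_2)}^{(1)}$ and $R_{(v_1,v_2)}^{(2)}$ are both proportional to the linear form $\sum_i R_i(x)\omega_i(x)$, the first line of \eqref{msop} (equivalently \eqref{eq2.14}) reads, in Hirota notation,
\begin{align*}
\langle R_{(v_1,v_2)}^{(1)}(x),\,y^j\omega_k(y)\rangle=0,\qquad 0\le j\le v_k-1,\quad k=1,2.
\end{align*}
The mechanism is then uniform: I expand the Pfaffian defining the second argument along its last (free) column via $\Pf(c_1,\dots,c_{2m-1},y)=\sum_l(-1)^{l-1}\pf(c_l,y)\,\Pf(c_1,\dots,\widehat{c_l},\dots,c_{2m-1})$, writing it as a constant-coefficient combination of the functions $\pf(c_l,y)=y^{j_l}\omega_{k_l}(y)$. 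Bilinearity of $\langle\cdot,\cdot\rangle$ together with the displayed orthogonality annihilates every term whose degree $j_l$ lies at or below the threshold $v_{k_l}-1$.

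First I would treat \eqref{3.5b}. Expanding $\tilde R_{(u_1,u_2)}^{(1)}(y)$ through \eqref{msop2} produces the monomials $y^j\omega_1$ for $j\in\{0,\dots,u_1-2,u_1\}$ and $y^j\omega_2$ for $j\in\{0,\dots,u_2-1\}$. When $u_1<v_1$ and $u_2\le v_2$ every such degree satisfies $j\le v_k-1$, so all terms are killed and the pairing vanishes. When $\vec u=\vec v$ only the single over-threshold entry $v_1^{(1)}$ (contributing $y^{v_1}\omega_1$) survives; its cofactor is $\Pf(0^{(1)},\dots,v_1-2^{(1)},0^{(2)},\dots,v_2-1^{(2)})=\tau_{(v_1-1,v_2)}$, while pairing $R_{(v_1,v_2)}^{(1)}$ against $y^{v_1}\omega_1$ reconstructs the Pfaffian with $v_1^{(1)}$ appended, namely $\pm\tau_{(v_1+1,v_2)}/d_{(v_1,v_2)}^{(1)}$. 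Multiplying these and using $d_{(v_1,v_2)}^{(1)}=(\tau_{(v_1-1,v_2)}\tau_{(v_1+1,v_2)})^{1/2}$ collapses the whole expression to $1$. Claim \eqref{3.5c} is entirely parallel, now with surviving entry $v_2^{(2)}$: the cofactor is $\tau_{(v_1,v_2-1)}$ and the pairing contributes $\tau_{(v_1,v_2+1)}$, so $d_{(v_1,v_2)}^{(2)}=(\tau_{(v_1,v_2-1)}\tau_{(v_1,v_2+1)})^{1/2}$ yields $d_{(v_1,v_2)}^{(2)}/d_{(v_1,v_2)}^{(1)}$; here the strict inequality $u_2<v_2$ with $u_1\le v_1$ is exactly what annihilates the off-diagonal cases.

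For \eqref{3.5a} the second argument $R_{(u_1,u_2)}^{(1)}(y)$ carries only degrees $j\le u_k-1$, so when $\vec u\le\vec v$ the same orthogonality kills every term; the opposite order follows from the skew-symmetry $\langle f(x),g(y)\rangle=-\langle g(x),f(y)\rangle$ of the form, and $\vec u=\vec v$ recovers Remark \ref{rem1}. The step I expect to be most delicate is the sign bookkeeping. One must verify that the signs produced by the cofactor expansion, combined with those from reordering the appended index into its natural position among $(0^{(1)},\dots,v_1-1^{(1)},0^{(2)},\dots,v_2-1^{(2)})$, conspire to give exactly $+1$ in \eqref{3.5b} and $+d_{(v_1,v_2)}^{(2)}/d_{(v_1,v_2)}^{(1)}$ in \eqref{3.5c}, rather than their negatives; this, together with checking that the degree thresholds reproduce precisely the stated inequalities on $(u_1,u_2)$, is where the argument must be carried out with care. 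The remaining manipulations are routine identities for the $\tau$-Pfaffians $\tau_{(v_1,v_2)}=\Pf(0^{(1)},\dots,v_1-1^{(1)},0^{(2)},\dots,v_2-1^{(2)})$.
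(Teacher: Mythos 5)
Your proposal is correct and takes essentially the same approach as the paper: expand the Pfaffian expression of the second argument, use bilinearity so that the defining orthogonality relations (equivalently, the vanishing of a Pfaffian with a repeated index) annihilate every sub-threshold term, and identify the single surviving diagonal term as $\tau_{(v_1,v_2-1)}\tau_{(v_1,v_2+1)}$ (resp.\ $\tau_{(v_1-1,v_2)}\tau_{(v_1+1,v_2)}$), which the normalization factors $d^{(1)}_{(v_1,v_2)},d^{(2)}_{(v_1,v_2)}$ collapse to the stated constants. The paper organizes the same computation as $\sum_{j\in I_2}(-1)^{|j|}\Pf(I_1,j)\Pf(I_2\backslash\{j\})$ and only writes out \eqref{3.5c} in detail (deferring \eqref{3.5a}--\eqref{3.5b} to the definitions of Section 2), but the mechanism and the level of sign bookkeeping are the same as yours.
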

\begin{proof}
Since equations \eqref{3.5a} and \eqref{3.5b} have been shown in the last section, we prove the third  equation \eqref{3.5c} by using Pfaffian notations. Taking the Pfaffian expressions \eqref{msop1} and \eqref{msop2} into the skew inner product, we have
\begin{align}
\begin{aligned}\label{sum}
{d_{(v_1,v_2)}^{(1)}d_{(u_1,u_2)}^{(2)}}&\langle R_{(v_1,v_2)}^{(1)}(x),\tilde{R}_{(u_1,u_2)}^{(2)}(y)\rangle
\\
&=\sum_{i\in I_1}\sum_{j\in I_2}(-1)^{|i|+|j|}\Pf(I_1\backslash\{i\})\Pf(I_2\backslash\{j\})\langle \pf(i,x),\pf(j,y)\rangle,
\end{aligned}
\end{align}
where $I_1=\{0^{(1)},\cdots,v_1-1^{(1)},0^{(2)},\cdots,v_2-1^{(2)}\}$, $I_2=\{0^{(1)},\cdots,u_1-1^{(1)},0^{(2)},\cdots,u_2-2^{(2)},u_2^{(2)}\}$, and $|i|(|j|)$ represents the position of $i(j)$ in the set $I_1(I_2)$. By noting that 
\begin{align*}
\langle \pf(i^{(k)},x),\pf(j^{(l)},y)\rangle=\int_{\gamma\times\gamma}x^i\s(x,y)y^j\omega_k(x)\omega_l(y)dxdy=\pf(i^{(k)},j^{(l)}),
\end{align*}
 then the right hand side in \eqref{sum} is equal to
\begin{align}\label{3.6}
\sum_{j\in I_2}(-1)^{|j|}\Pf(I_1,j)\Pf(I_2\backslash\{j\}).
\end{align}
It is known that a Pfaffian is equal to zero if two indices in a Pfaffian are equal. Therefore, if $u_1\leq v_1$ and $u_2<v_2$, we know that $j\in I_1$ and above formula is identically zero. If $u_1=v_1$ and $u_2=v_2$, then only when $j=u_2^{(2)}=v_2^{(2)}$, the term is non-zero. In such a case, equation  \eqref{3.6} is equal to 
$
\tau_{(v_1,v_2+1)}\tau_{(v_1,v_2-1)}.
$
Therefore, we have
\begin{align*}
\langle R_{(v_1,v_2)}^{(1)}(x),\tilde{R}_{(v_1,v_2)}^{(2)}(y)\rangle=\frac{\tau_{(v_1,v_2+1)}\tau_{(v_1,v_2-1)}}{d_{(v_1,v_2)}^{(1)}d_{(v_1,v_2)}^{(2)}}=\frac{d_{(v_1,v_2)}^{(2)}}{d_{(v_1,v_2)}^{(1)}}.
\end{align*}
\end{proof}

\subsection{Semi-classical weights and deformed MSOPs}
Let's consider semi-classical weight functions. By introducing parameters $\mt:=(t_1,t_2,\cdots)$  and $\ms:=(s_1,s_2,\cdots)$ into weights $\omega_1$ and $\omega_2$ respectively such that
\begin{align*}
\omega_1(x;\mt)=\omega_1(x)\exp\left(
\sum_{i=1}^\infty t_ix^i
\right),\quad \omega_2(x;\ms)=\omega_2(x)\exp\left(
\sum_{i=1}^\infty s_ix^i
\right),
\end{align*}
we have
\begin{align*}
\p_{t_i}\omega_1(x;\mt)=x^i\omega_1(x;\mt),\quad \p_{s_i}\omega_2(x;\ms)=x^i\omega_2(x;\ms),\quad \p_{t_i}\omega_2(x;\ms)=\p_{s_i}\omega_1(x;\mt)=0.
\end{align*}
Moreover, moments are time-dependent and they obey the following deformations.
\begin{proposition}\label{prop3.3}
For moments $\{m_{a,b}^{(k,l)},\,k,l=1,2\}$, they have following evolutions
\begin{align*}
&\p_{t_i}m_{a,b}^{(1,1)}=m_{a+i,b}^{(1,1)}+m_{a,b+i}^{(1,1)},&&\p_{t_i}m_{a,b}^{(1,2)}=m_{a+i,b}^{(1,2)},&\p_{t_i}m_{a,b}^{(2,2)}=0,\\
&\p_{s_i}m_{a,b}^{(2,2)}=m_{a+i,b}^{(2,2)}+m_{a,b+i}^{(2,2)},&&\p_{s_i}m_{a,b}^{(1,2)}=m_{a,b+i}^{(1,2)},&\p_{s_i}m_{a,b}^{(1,1)}=0.
\end{align*}
Equivalently, in Pfaffian notations we have
\begin{align*}
&\p_{t_i}\pf(a^{(1)},b^{(1)})=\pf(a+i^{(1)},b^{(1)})+\pf(a^{(1)},b+i^{(1)}),&&\p_{t_i}\pf(a^{(1)},b^{(2)})=\pf(a+i^{(1)},b^{(2)}),\\
&\p_{s_i}\pf(a^{(2)},b^{(2)})=\pf(a+i^{(2)},b^{(2)})+\pf(a^{(2)},b+i^{(2)}),&&\p_{s_i}\pf(a^{(1)},b^{(2)})=\pf(a^{(1)},b+i^{(2)}),\\
&\p_{t_i}\pf(a^{(2)},b^{(2)})=\p_{s_i}\pf(a^{(1)},b^{(1)})=0.
\end{align*}
\end{proposition}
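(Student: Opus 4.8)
The plan is to prove the moment evolutions directly from the integral definition of the bi-moments, and then simply translate the result into Hirota's Pfaffian notation. Recall that
\begin{align*}
m_{a,b}^{(k,l)}=\int_{\gamma\times\gamma}x^a y^b\,\s(x,y)\,\omega_k(x)\,\omega_l(y)\,dxdy,
\end{align*}
where the time dependence enters only through the semi-classical weights $\omega_1(x;\mt)$ and $\omega_2(y;\ms)$. Since $\s(x,y)$ and the contours $\gamma$ carry no time dependence, I would differentiate under the integral sign, which is justified by the assumption that all moments exist and are finite (so the deformed integrands remain integrable). The entire argument then reduces to applying the elementary identities $\p_{t_i}\omega_1(x;\mt)=x^i\omega_1(x;\mt)$ and $\p_{s_i}\omega_2(x;\ms)=x^i\omega_2(x;\ms)$, together with $\p_{t_i}\omega_2=\p_{s_i}\omega_1=0$, which were established just before the statement.

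The key steps are organized by the six cases in the proposition, grouped by which weight each factor involves. First I would treat $\p_{t_i}m_{a,b}^{(1,1)}$: here both factors are $\omega_1$, so $\p_{t_i}$ hits each by the product rule, producing $\int x^{a+i}y^b\s\,\omega_1\omega_1 + \int x^a y^{b+i}\s\,\omega_1\omega_1 = m_{a+i,b}^{(1,1)}+m_{a,b+i}^{(1,1)}$. Next, for $\p_{t_i}m_{a,b}^{(1,2)}$ only the $x$-factor is $\omega_1$ while the $y$-factor is $\omega_2$ (which is annihilated by $\p_{t_i}$), so a single term $m_{a+i,b}^{(1,2)}$ survives; and $\p_{t_i}m_{a,b}^{(2,2)}=0$ immediately since neither factor depends on $\mt$. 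The three $\p_{s_i}$ identities follow by the symmetric argument with the roles of the two weights and variables interchanged. This exhausts all cases and establishes the moment-level evolutions.

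The final step is purely notational: using the dictionary $\pf(a^{(k)},b^{(l)})=m_{a,b}^{(k,l)}$ introduced in \eqref{msop1}, each of the six moment identities is rewritten verbatim in Pfaffian entries, which is exactly the second display in the proposition. I do not anticipate a genuine obstacle here, since the computation is a direct application of the product rule; the only point requiring a word of care is the justification for differentiating under the integral sign, which rests on the standing finiteness hypothesis on the moments (one may note that multiplying the integrand by $x^i$ or $x^i y^i$ preserves integrability, as these are precisely the higher moments assumed finite). The overall structure of the proof is therefore a short, case-by-case differentiation followed by a change of notation.
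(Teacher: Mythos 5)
Your proposal is correct and follows essentially the same route as the paper's own proof: differentiation under the integral sign (justified by finiteness of the moments and smoothness of the semi-classical weights), followed by the product rule and the weight-deformation identities $\p_{t_i}\omega_1=x^i\omega_1$, $\p_{s_i}\omega_2=x^i\omega_2$, $\p_{t_i}\omega_2=\p_{s_i}\omega_1=0$, and then a purely notational translation into Pfaffian entries. The only difference is cosmetic: the paper works out one representative case and states that the others follow similarly, whereas you enumerate all six cases explicitly.
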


\begin{proof}
Let's prove $\p_{t_i}\pf(a^{(1)},b^{(1)})=\pf(a+i^{(1)},b^{(1)})+\pf(a^{(1)},b+i^{(1)})$, and other cases could be similarly verified. We first have
\begin{align*}
\p_{t_i}\pf(a^{(1)},b^{(1)})=\p_{t_i}\int_{\gamma\times\gamma} x^a\s(x,y)y^b\omega_1(x;\mt)\omega_1(y;\mt)dxdy.
\end{align*}
By noting that the moment is finite and weight $\omega_1(x;\mt)$ is smooth with respect to $\mt$, we know that the order of derivative and integration could be exchanged. Therefore, the above formula is equal to
\begin{align*}
\int_{\gamma\times\gamma} x^a\s(x,y)y^b (x^i+y^i)\omega_1(x;\mt)\omega_1(x;\mt)dxdy,
\end{align*}
which is exactly $\pf(a+i^{(1)},b^{(1)})+\pf(a^{(1)},b+i^{(1)})$.
\end{proof}

With such time parameters introduced, we can use derivative formulas for Wronskian type Pfaffians to deduce deformation relations for the linear forms of MSOPs.
\begin{proposition}\label{td}
 $R_{(v_1,v_2)}^{(i)}(x;\mt,\ms)$ and $\tilde{R}_{(v_1,v_2)}^{(i)}(x;\mt,\ms)$ $(i=1,2)$ have the following derivative relations
\begin{align}\label{te}
\begin{aligned}
\p_{t_1}\left(d_{(v_1,v_2)}^{(1)}R_{(v_1,v_2)}^{(1)}(x;\mt,\ms)\right)=d_{(v_1,v_2)}^{(1)}\tilde{R}_{(v_1,v_2)}^{(1)}(x;\mt,\ms),\\
\p_{s_1}\left(d_{(v_1,v_2)}^{(2)}R_{(v_1,v_2)}^{(2)}(x;\mt,\ms)\right)=d_{(v_1,v_2)}^{(2)}\tilde{R}_{(v_1,v_2)}^{(2)}(x;\mt,\ms).
\end{aligned}
\end{align}
\begin{proof}
Since $\mt$ and $\ms$ are dual to each other, we only prove the $t_1$-derivative formula. By using Pfaffian notations, it is equivalent to show that
\begin{align}\label{derivative1}
\begin{aligned}
\p_{t_1}&\pf(0^{(1)},\cdots,v_1-1^{(1)},0^{(2)},\cdots,v_2-1^{(2)},x)\\
&=\pf(0^{(1)},\cdots,v_1-2^{(1)},v_1^{(1)},0^{(2)},\cdots,v_2-1^{(2)},x).
\end{aligned}
\end{align}
If we introduce the index sets 
\begin{align*}\text{$I_1=\{0^{(1)},\cdots,v_1-1^{(1)}\}$, $\tilde{I}_1=\{0^{(1)},\cdots,v_1-2^{(1)},v_1^{(1)}\}$,  $I_2=\{0^{(2)},\cdots,v_2-1^{(2)}\}$},
\end{align*}
then  by expanding the Pfaffian, the left hand side in \eqref{derivative1} is equal to 
\begin{align*}
\p_{t_1}\left(
\sum_{i\in I_1}(-1)^{|i|}\pf(I_1\backslash\{i\},I_2)\pf(i,x)+\sum_{j\in I_2}(-1)^{|j|}\pf(I_1,I_2\backslash\{j\})\pf(j,x)
\right).
\end{align*}
By using derivative formula for Wronskian-type Pfaffians (see Appendix for details), we know that the first term is equal to 
\begin{align}\label{de2}
\begin{aligned}
&\sum_{i\in I_1\backslash\{0^{(1)}\}}(-1)^{|i|}\pf(I_1\backslash\{i-1\},I_2)\pf(i,x)+\sum_{i\in \tilde{I}_1\backslash\{v_1^{(1)}\}}(-1)^{|i|}\pf(\tilde{I}_1\backslash\{i\},I_2)\pf(i,x)\\
&+\sum_{i\in I_1}(-1)^{|i|}\pf(I_1\backslash\{i\},I_2)\pf(i+1,x),
\end{aligned}
\end{align}
and the second equals
\begin{align*}
\sum_{j\in I_2}(-1)^{|j|}\pf(\tilde{I}_1,I_2\backslash\{j\})\pf(j,x).
\end{align*}
A cancellation can be applied to the first and third term in \eqref{de2}. Thus, by combining these equations, we obtain 
\begin{align*}
	\sum_{i\in \tilde{I}_1}(-1)^{|i|}\pf(\tilde{I}_1\backslash\{i\},I_2)\pf(i,x)+\sum_{j\in I_2}(-1)^{|j|}\pf(\tilde{I}_1,I_2\backslash\{j\})\pf(j,x),
\end{align*}
which is exactly the expansion of the right hand side in \eqref{derivative1}.
\end{proof}

\end{proposition}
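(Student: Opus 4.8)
The plan is to reduce both lines of \eqref{te} to a single Pfaffian derivative identity and then exploit the index-shifting structure of the $t_1$-flow. Since $\mt$ and $\ms$ enter the two weights in completely symmetric fashion --- interchanging $\omega_1\leftrightarrow\omega_2$ together with the superscripts $(1)\leftrightarrow(2)$ swaps the two lines of \eqref{te} --- I would only establish the $t_1$-derivative and obtain the $s_1$-derivative by this duality. Inserting the Pfaffian expressions \eqref{msop1} and \eqref{msop2}, the combinations $d^{(1)}_{(v_1,v_2)}R^{(1)}_{(v_1,v_2)}$ and $d^{(1)}_{(v_1,v_2)}\tilde R^{(1)}_{(v_1,v_2)}$ are exactly the bare Pfaffians, which is precisely why the statement keeps $d^{(1)}_{(v_1,v_2)}$ inside the derivative: no product rule on the (time-dependent) normalization is then needed, and the first line of \eqref{te} becomes literally the claim \eqref{derivative1}, that $\p_{t_1}$ of $\Pf(0^{(1)},\dots,v_1-1^{(1)},0^{(2)},\dots,v_2-1^{(2)},x)$ merely replaces the top first-component index $v_1-1^{(1)}$ by $v_1^{(1)}$.

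The key input is Proposition \ref{prop3.3} together with $\p_{t_1}\omega_1(x;\mt)=x\,\omega_1(x;\mt)$ and $\p_{t_1}\omega_2=0$. I would package these into the single statement that $\p_{t_1}$ acts as a derivation $D$ on the Pfaffian entries, where $D$ raises each first-component label by one, $D\,a^{(1)}=a+1^{(1)}$, and annihilates every second-component label as well as the $x$-column, $D\,a^{(2)}=0$, $Dx=0$. Concretely, one checks entry by entry that $\pf(e,e')$ obeys $\p_{t_1}\pf(e,e')=\pf(De,e')+\pf(e,De')$ for all label pairs, including those involving $x$: the $(1,1)$ case produces the two terms recorded in Proposition \ref{prop3.3}, the $(1,2)$ case a single term, while the $(2,2)$ and $(2,x)$ entries vanish, exactly as listed there. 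This verification is routine and is the only genuine computation in the argument.

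With the derivation property in hand I would invoke the derivative formula for Wronskian-type Pfaffians recorded in the Appendix, which writes $\p_{t_1}\Pf(e_1,\dots,e_{2n})=\sum_{k}\Pf(e_1,\dots,De_k,\dots,e_{2n})$, and apply it to the list $0^{(1)},\dots,v_1-1^{(1)},0^{(2)},\dots,v_2-1^{(2)},x$ (of even length, since $v_1+v_2$ is odd). Every term differentiating a second-component label or the $x$-column drops out because $D$ kills it. For a first-component label $i^{(1)}$ with $i<v_1-1$ the shift yields $i+1^{(1)}$, which already appears in the list, so that Pfaffian vanishes by the repeated-index property. The sole surviving term is the one raising the top label $v_1-1^{(1)}\mapsto v_1^{(1)}$, which is exactly the right-hand side of \eqref{derivative1}; the duality then delivers the $s_1$-identity.

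I expect the main obstacle to be the correct formulation and justification of this Pfaffian Leibniz rule, i.e. verifying that the three moment evolutions of Proposition \ref{prop3.3} genuinely assemble into one derivation acting by the uniform shift $a^{(1)}\mapsto a+1^{(1)}$. The asymmetry between the two superscripts --- one index raised, the other annihilated --- and the handling of the mixed $x$-column entries are where sign and bookkeeping errors are most likely to arise; this is essentially the point where the author performs the explicit telescoping cancellation in \eqref{de2}. Once the structural derivation fact is secured, the vanishing of all but the top term is immediate, and the identity follows without further effort.
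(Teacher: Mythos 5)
Your proposal is correct, and it reaches \eqref{derivative1} by a cleaner route than the paper's own proof. Both arguments rest on the same two ingredients---the Wronskian-type derivative rules of Proposition \ref{prop3.3}, extended to entries involving $x$ via $\p_{t_1}\omega_1(x;\mt)=x\,\omega_1(x;\mt)$ and $\p_{t_1}\omega_2(x;\ms)=0$, together with the vanishing of a Pfaffian having a repeated index---but they are organized differently. The paper first expands $\Pf(0^{(1)},\cdots,v_1-1^{(1)},0^{(2)},\cdots,v_2-1^{(2)},x)$ along the $x$-column, applies the Leibniz rule and the appendix formula only to the resulting $x$-free moment Pfaffians, and then must carry out the telescoping cancellation between the first and third sums in \eqref{de2} to reassemble the expansion of the right-hand side of \eqref{derivative1}. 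You instead observe that, under the $t_1$-flow, the label $x$ obeys exactly the same entry-wise rules as an inert second-component label, namely $\p_{t_1}\pf(i^{(1)},x)=\pf(i+1^{(1)},x)$ and $\p_{t_1}\pf(j^{(2)},x)=0$, so that Proposition \ref{propb1} applies to the full Pfaffian with $\mathcal{I}=\{0^{(1)},\cdots,v_1-1^{(1)}\}$ and $\mathcal{J}=\{0^{(2)},\cdots,v_2-1^{(2)},x\}$; every shifted term except the one raising $v_1-1^{(1)}$ to $v_1^{(1)}$ then dies by the repeated-index property, and \eqref{derivative1} follows in one line. What your version buys is the elimination of the expansion step and all sign and cancellation bookkeeping; what it requires in exchange is the (routine but genuinely necessary) check that the inductive proof of Proposition \ref{propb1} uses nothing about the labels beyond the stated derivative rules, so the symbolic column $x$ may legitimately be adjoined to $\mathcal{J}$---harmless here, since the $t_1$-formula in Proposition \ref{propb1} never shifts a $\mathcal{J}$-label, so ``$x+k$'' never needs a meaning. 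Your reduction of \eqref{te} to \eqref{derivative1} via \eqref{msop1}--\eqref{msop2}, and the duality argument disposing of the $s_1$-equation, coincide with the paper's.
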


In despite of time evolutions for the linear forms of MSOPs, there should be spectral problems between $R_{(v_1,v_2)}^{(i)}(x)$ and $\tilde{R}_{(v_1,v_2)}^{(i)}(x)$ $(i=1,2)$, which are prominent in the derivations of integrable hierarchies. In below, we use Pfaffian identities to characterize spectral problems. 

\begin{proposition}\label{prop3.5}
$R_{(v_1,v_2)}^{(i)}(x;\mt,\ms)$ and $\tilde{R}_{(v_1,v_2)}^{(i)}(x;\mt,\ms)$ $(i=1,2)$ satisfy the following recurrence relations
\begin{subequations}
\begin{align}
\tau_{(v_1,v_2-1)}d_{(v_1+1,v_2+1)}^{(2)}&R_{(v_1+1,v_2+1)}^{(2)}(x)=\tau_{(v_1+1,v_2)}d_{(v_1,v_2)}^{(2)}\tilde{R}^{(2)}_{(v_1,v_2)}(x)\label{sub1}\\
&-\p_{s_1}\tau_{(v_1+1,v_2)}d_{(v_1,v_2)}^{(2)}R_{(v_1,v_2)}^{(2)}(x)+\tau_{(v_1,v_2+1)}d_{(v_1+1,v_2-1)}^{(2)}R_{(v_1+1,v_2-1)}^{(2)}(x),\nonumber\\
\tau_{(v_1-1,v_2)}d_{(v_1+1,v_2+1)}^{(1)}&R_{(v_1+1,v_2+1)}^{(1)}(x)=\tau_{(v_1+1,v_2)}d_{(v_1-1,v_2+1)}^{(1)}R_{(v_1-1,v_2+1)}^{(1)}(x)\label{sub2}\\
&-\tau_{(v_1,v_2+1)}d_{(v_1,v_2)}^{(1)}\tilde{R}_{(v_1,v_2)}^{(1)}(x)+\p_{t_1}\tau_{(v_1,v_2+1)}d_{(v_1,v_2)}^{(1)}R_{(v_1,v_2)}^{(1)}(x),\nonumber\\
\p_{t_1}\tau_{(v_1,v_2-1)}d_{(v_1,v_2)}^{(1)}&R_{(v_1,v_2)}^{(1)}(x)=\tau_{(v_1,v_2-1)}d_{(v_1,v_2)}^{(1)}\tilde{R}_{(v_1,v_2)}^{(1)}(x)\label{sub3}\\
&+\tau_{(v_1-1,v_2)}d_{(v_1+1,v_2-1)}^{(1)}R_{(v_1+1,v_2-1)}^{(1)}(x)-\tau_{(v_1+1,v_2)}d_{(v_1-1,v_2-1)}^{(1)}R_{(v_1-1,v_2-1)}^{(1)}(x),\nonumber\\
\p_{s_1}\tau_{(v_1-1,v_2)}d_{(v_1,v_2)}^{(2)}&R_{(v_1,v_2)}^{(2)}(x)=-\tau_{(v_1,v_2-1)}d_{(v_1-1,v_2+1)}^{(2)}R_{(v_1-1,v_2+1)}^{(2)}(x)\label{sub4}\\
&+\tau_{(v_1-1,v_2)}d_{(v_1,v_2)}^{(2)}\tilde{R}_{(v_1,v_2)}^{(2)}(x)+\tau_{(v_1,v_2+1)}d_{(v_1-1,v_2-1)}^{(2)}R_{(v_1-1,v_2-1)}^{(2)}(x).\nonumber
\end{align}
\end{subequations}
\end{proposition}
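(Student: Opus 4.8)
The plan is to rewrite every term through Hirota's Pfaffian notation via the expressions \eqref{msop1} and \eqref{msop2}, to recast the $\tau$-functions and their $t_1$- or $s_1$-derivatives as single Pfaffians, and then to recognize each of the four claimed recurrences as an instance of the fundamental Pfaffian (Plücker-type) identity. Because $\mt$ and $\ms$ play dual roles, it suffices to treat \eqref{sub1} and \eqref{sub3} in detail, the relations \eqref{sub2} and \eqref{sub4} following by interchanging the two components.

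First I would clear the normalization factors $d_{\vec v}^{(i)}$ so that every object becomes a genuine Pfaffian. For \eqref{sub1}, for instance, $d_{(v_1+1,v_2+1)}^{(2)}R_{(v_1+1,v_2+1)}^{(2)}(x)=\Pf(0^{(1)},\dots,v_1^{(1)},0^{(2)},\dots,v_2^{(2)},x)$, $d_{(v_1,v_2)}^{(2)}\tilde R_{(v_1,v_2)}^{(2)}(x)=\Pf(0^{(1)},\dots,v_1-1^{(1)},0^{(2)},\dots,v_2-2^{(2)},v_2^{(2)},x)$, and $\tau_{(v_1,v_2-1)}=\Pf(0^{(1)},\dots,v_1-1^{(1)},0^{(2)},\dots,v_2-2^{(2)})$. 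The only non-Pfaffian ingredient is the derivative $\p_{s_1}\tau_{(v_1+1,v_2)}$; invoking Proposition \ref{prop3.3} together with the derivative formula for Wronskian-type Pfaffians from the Appendix, exactly as in the proof of Proposition \ref{td}, it collapses to $\Pf(0^{(1)},\dots,v_1^{(1)},0^{(2)},\dots,v_2-2^{(2)},v_2^{(2)})$, i.e. the topmost second-component index is raised from $v_2-1^{(2)}$ to $v_2^{(2)}$.

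Next I would isolate a common base set of even cardinality. For \eqref{sub1} this is $B=\{0^{(1)},\dots,v_1-1^{(1)},0^{(2)},\dots,v_2-2^{(2)}\}$, with the four surplus indices $a=v_1^{(1)}$, $b=v_2-1^{(2)}$, $c=v_2^{(2)}$, $d=x$; every factor then equals $\Pf(B)$ or one of $\Pf(B,a,b,c,d)$, $\Pf(B,a,b)$, $\Pf(B,c,d)$, $\Pf(B,a,c)$, $\Pf(B,b,d)$, $\Pf(B,b,c)$, $\Pf(B,a,d)$, so that \eqref{sub1} becomes exactly
\begin{align*}
\Pf(B)\Pf(B,a,b,c,d)=\Pf(B,a,b)\Pf(B,c,d)-\Pf(B,a,c)\Pf(B,b,d)+\Pf(B,a,d)\Pf(B,b,c),
\end{align*}
the Pfaffian analogue of the Plücker/Jacobi identity. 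For \eqref{sub3} the natural base $B'=\{0^{(1)},\dots,v_1-2^{(1)},0^{(2)},\dots,v_2-2^{(2)}\}$ has odd cardinality, and with the four surplus indices $p_0,\dots,p_3$ given by $v_1-1^{(1)},v_1^{(1)},v_2-1^{(2)},x$ the relation reduces instead to the odd-base companion
\begin{align*}
\sum_{k=0}^{3}(-1)^{k}\Pf(B',p_k)\Pf(B',p_0,\dots,\widehat{p_k},\dots,p_3)=0
\end{align*}
of the same identity, the hat denoting omission of that index.

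The routine but delicate part is the bookkeeping of signs: one must check that the permutation sign generated when the surplus indices are reordered into canonical Pfaffian position matches the explicit $\pm$ in \eqref{sub1}--\eqref{sub4}, and that the index advanced by $\p_{s_1}$ (resp. $\p_{t_1}$) occupies the slot dictated by the derivative formula. This sign-tracking, rather than any structural obstruction, is the main difficulty; once the index sets and signs are aligned, each recurrence is an immediate consequence of the fundamental Pfaffian identity and the proof is complete.
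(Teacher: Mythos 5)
Your proposal is correct and follows essentially the same route as the paper: both rewrite every factor of \eqref{sub1} as a Pfaffian over the common base set $\{0^{(1)},\dots,v_1-1^{(1)},0^{(2)},\dots,v_2-2^{(2)}\}$ with surplus indices $v_1^{(1)},\,v_2-1^{(2)},\,v_2^{(2)},\,x$, identify $\p_{s_1}\tau_{(v_1+1,v_2)}$ as a single Pfaffian via the Wronskian-type derivative formula, and conclude by the fundamental Pfaffian identity, treating the remaining relations analogously. The only cosmetic difference is that you match \eqref{sub1} to the even-base identity \eqref{a1a} and \eqref{sub3} to the odd-base identity \eqref{a1b}, which is the consistent assignment (the base set for \eqref{sub1} has even cardinality since $v_1+v_2$ is odd), whereas the paper's text cites \eqref{a1b} for \eqref{sub1}; this is a labeling slip in the paper, not a difference of method.
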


\begin{proof}
We verify the first equation by making the use of Pfaffian identity \eqref{a1b}, and the others could be similarly verified. 
Taking symbols 
\begin{align*}
a_1=v_1^{(1)},\quad a_2=v_2-1^{(2)},\quad a_3=v_2^{(2)},\quad a_4=x,\quad \star=\{0^{(1)},\cdots,v_1-1^{(1)},0^{(2)},\cdots,v_2-2^{(2)}\}
\end{align*} in \eqref{a1b}, we arrive at the desired formula from Pfaffian expressions \eqref{msop1}-\eqref{msop2} and realize that 
\begin{align*}
\p_{s_1}\tau_{(v_1+1,v_2)}=\pf(0^{(1)},\cdots,v_1^{(1)},0^{(2)},\cdots,v_2-2^{(2)},v_2^{(2)}).
\end{align*}
\end{proof}

Several simple integrable lattices could be obtained directly by using these relations. Expanding the linear form  \eqref{msop2}, we have
\begin{align*}
d_{(v_1,v_2)}^{(1)}R_{(v_1,v_2)}^{(1)}(x)&=(-1)^{v_1-1}\omega_1(x)\left(x^{v_1-1}\tau_{(v_1-1,v_2)}-x^{v_1-2}\p_{t_1}\tau_{(v_1-1,v_2)}+\cdots\right)\\
&+\omega_2(x)\left(
x^{v_2-1}\tau_{(v_1,v_2-1)}-x^{v_2-2}\p_{s_1}\tau_{(v_1,v_2-1)}+\cdots
\right).
\end{align*}
Moreover, if the equation \eqref{te} is taken into account, then the following equations 
\begin{align}\label{pfafftoda}
\begin{aligned}
&D_{t_1}\tau_{(v_1,v_2-1)}\cdot\tau_{(v_1,v_2+1)}=D_{s_1}\tau_{(v_1+1,v_2)}\cdot\tau_{(v_1-1,v_2)},\\
&D_{s_1}D_{t_1}\tau_{(v_1-1,v_2)}\cdot\tau_{(v_1-1,v_2)}=2\left(
\tau_{(v_1,v_2-1)}\tau_{(v_1-2,v_2+1)}-\tau_{(v_1,v_2+1)}\tau_{(v_1-2,v_2-1)}
\right)
\end{aligned}
\end{align}
are obtained by comparing the coefficients of $x^{v_1-2}\omega_1(x)$ and $x^{v_2-2}\omega_2(x)$ respectively. Here $D_t$ is the Hirota's bilinear operator defined by \cite{hirota04} 
\begin{align}\label{bo}
D_t^m D_x^n f(x,t)\cdot g(x,t)=\left.\frac{\p^m}{\p s^m}\frac{\p^n}{\p y^n} f(t+s,x+y)g(t-s,x-y)\right|_{s=0,y=0.}
\end{align}
Equations \eqref{pfafftoda} have appeared as a generalization of 2D Toda lattice, see for example 
\cite{hu05,takasaki09,santini04,gilson05,willox05}. Before we proceed to  further discussions about the recurrence, we demonstrate a reduction from MSOPs to skew orthogonal polynomials (SOPs).

\subsection{Reduction: from MSOPs to SOPs}
As was shown in \cite[Section 2.4]{takasaki09}, the hierarchy governing equations \eqref{pfafftoda} could be reduced to the DKP hierarchy from the perspective of fermionic representation. We reconfirm the fact in this part by performing reductions of MSOPs. 

By considering one index set $v=\{0,\cdots,2n\}$ and one weight function $\omega(x)$, we could define skew orthogonal polynomials $\{p_{2n}(x),p_{2n+1}(x)\}_{n\in\mathbb{N}}$ by the following skew orthogonal relation 
\begin{align*}
\langle p_{2n}(x),p_{2m}(x)\rangle=\langle p_{2n+1}(x),p_{2m+1}(x)\rangle=0,\quad \langle p_{2n}(x),p_{2m+1}(x)\rangle=\delta_{n,m},
\end{align*}
where $\langle\cdot,\cdot\rangle$ is a skew symmetric bilinear form on $\mathbb{R}[x]\times \mathbb{R}[y]\to\mathbb{R}$ and
\begin{align*}
\langle f(x),g(x)\rangle=\int_{\gamma\times\gamma} f(x)\s(x,y)g(y)\omega(x)\omega(y)dxdy,\quad \s(x,y)=-\s(y,x).
\end{align*}
This is a reductional version compared with  Proposition \ref{prop3.2}. Moreover, $\{p_{2n}(x),p_{2n+1}(x)\}_{n\in\mathbb{N}}$ are polynomials with Pfaffian expressions \cite[Thm. 3.1]{adler992}
\begin{align*}
p_{2n}(x)=d_n^{-1}\pf(0,\cdots,2n,x),\quad p_{2n+1}(x)=d_n^{-1}\pf(0,\cdots,2n-1,2n+1,x),
\end{align*}
where $d_n={(\tau_{2n}\tau_{2n+2})^{1/2}}$, $\tau_{2n}=\pf(0,\cdots,2n-1)$ and Pfaffian elements are given by 
\begin{align*}
\pf(i,j)=\langle x^i,y^j\rangle, \quad \pf(i,x)=x^i.
\end{align*}
By introducing the time flows $\mt=(t_1,t_2,\cdots)$ such that $\partial_{t_i}\omega(x;\mt)=x^i\omega(x;\mt)$, it was found that the skew orthogonal polynomials satisfy \cite[Thm. 3.1]{adler992}
\begin{align*}
(z+\p_{t_1})(d_np_{2n}(z))=d_np_{2n+1}(z).
\end{align*}
This equation coincides with equation \eqref{td} in multi-component case, and plays a role as spectral problem in integrable system theory. 

In literatures, the first study between SOPs and Pfaff lattice was carried out in \cite{adler992} from a view of  Lie algebra splitting. Later on, the correspondence was reformulated from different perspectives such as reductions from 2d-Toda theory \cite{adler02,li19}, Toda lattice and Pfaff lattice correspondence \cite{adler03}, symplectic matrices \cite{kodama10}, and so on.  
Therefore, it is natural to ask whether there is any local recurrence for SOPs which could be applied to derive integrable systems. Unfortunately, we couldn't find a compact relation between $p_{2n}(z)$ and $p_{2n+1}(z)$ as multi-component case in Proposition \ref{prop3.5}. By taking $\star=\{0,\cdots,2n-2\}$, $a_1=2n-1$, $a_2=2n$, $a_3=2n+1$ and $a_4=x$ in the identity \eqref{a1b} and using the equation
\begin{align*}
(\p_{t_2}+\p_{t_1}^2)\tau_{2n}=2\pf(0,1,\cdots,2n-2,2n+1),
\end{align*}
one has
\begin{align*}
\tau_{2n+2}d_{2n-2}p_{2n-2}(x)&=\frac{1}{2}(\p_{t_2}+\p_{t_1}^2)\tau_{2n}d_{2n}p_{2n}(x)-\p_{t_1}\tau_{2n}d_{2n}p_{2n+1}(z)\\&+\tau_{2n}\pf(0,\cdots,2n-2,2n,2n+1,z).
\end{align*}
This relation is non-compact since the last term could not be written in terms of skew orthogonal polynomials. However, due to the independency of function, integrable lattices could also be obtained by comparing the coefficients of monomials on both sides. The simplest equation arises when comparing the coefficients of $x^{2n-2}$, and one has
\begin{align*}
(D_1^4-4D_1D_3+3D_2^2)\tau_{2n}\cdot\tau_{2n}=24\tau_{2n-2}\tau_{2n+2}.
\end{align*}
This is the first member in the DKP hierarchy.

\section{Integrable lattice hierarchies from identities of MSOPs}\label{sec4}
In this part, we demonstrate that MSOPs could be expressed by 2-component Pfaffian $\tau$-functions $\{\tau_{(i,j)}(\mt,\ms)\}_{i,j\in\mathbb{N}}$ with $i+j\in 2\mathbb{N}$.  Since MSOPs are multi-component generalizations of SOPs, we call the corresponding integrable hierarchy as multiple-component Pfaff lattice hierarchy, especially a 2-component Pfaff lattice hierarchy in this paper. 

There are two different perspectives in deriving those integrable hierarchies, as mentioned in the introduction part. One is to express polynomials by $\tau$-functions. By substituting $\tau$-functions into recurrence relations, integrable hierarchy involving neighboring $\tau$-functions could be obtained. Another method is to make use of bilinear form and Cauchy transform. By using these methods, some famous integrable equations, such as the so-called Pfaff-Toda lattice and modified coupled KP equation are derived.
It is also shown that 2-component Pfaff lattice hierarchy derived from MSOPs is equivalent to Takasaki's Pfaff-Toda hierarchy.

\subsection{From recurrence relations \eqref{sub1}-\eqref{sub4} to integrable hierarchy}\label{sec4.1}
In this part, $\tau$-function expressions for the linear forms of MSOPs are given to characterize the corresponding integrable hierarchy. To this end, we first demonstrate an explicit connection between the linear forms of MSOPs and 2-component Pfaffian $\tau$-functions.
\begin{proposition}\label{prop4.1}
The linear forms $R_{(v_1,v_2)}^{(i)}(x;\mt,\ms)$ $(i=1,2)$ of multiple skew orthogonal polynomials  could be alternatively written by
\begin{align}\label{exp}
\begin{aligned}
d_{(v_1,v_2)}^{(i)}R_{(v_1,v_2)}^{(i)}(x;\mt,\ms)&=
(-1)^{v_1-1}\omega_1(x;\mt)x^{v_1-1}\tau_{(v_1-1,v_2)}(\mt-[x^{-1}],\ms)\\\
&+\omega_2(x;\ms)x^{v_2-1}\tau_{(v_1,v_2-1)}(\mt,\ms-[x^{-1}]),
\end{aligned}
\end{align}
where symbol $[\alpha]$ represents the Miwa variable 
\begin{align*}
[\alpha]=\left(
\alpha,\frac{\alpha^2}{2},\cdots,\frac{\alpha^n}{n},\cdots
\right).
\end{align*}
\end{proposition}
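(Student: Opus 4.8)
The plan is to start from the Pfaffian expression \eqref{msop1}--\eqref{msop2}, noting that by the remark following \eqref{msop2} the product $d_{(v_1,v_2)}^{(i)}R_{(v_1,v_2)}^{(i)}(x)$ equals the single Pfaffian $\Pf(0^{(1)},\cdots,v_1-1^{(1)},0^{(2)},\cdots,v_2-1^{(2)},x)$ for both $i=1,2$, so it suffices to treat this common object. First I would expand it along the final index $x$. Since $\pf(i^{(k)},x)=\omega_k(x)x^i$, the expansion separates cleanly into an $\omega_1$-part and an $\omega_2$-part; with the standard Pfaffian cofactor signs, which for this ordering read $(-1)^i$ and $(-1)^{v_1+j}$, one obtains
\begin{align*}
d_{(v_1,v_2)}^{(i)}R_{(v_1,v_2)}^{(i)}(x)=\omega_1(x)\sum_{i=0}^{v_1-1}(-1)^{i}x^{i}\,\Pf(\widehat{i^{(1)}})+\omega_2(x)\sum_{j=0}^{v_2-1}(-1)^{v_1+j}x^{j}\,\Pf(\widehat{j^{(2)}}),
\end{align*}
where $\Pf(\widehat{i^{(1)}})$ denotes the Pfaffian of $\{0^{(1)},\cdots,v_1-1^{(1)},0^{(2)},\cdots,v_2-1^{(2)}\}$ with $i^{(1)}$ deleted, and similarly for $\Pf(\widehat{j^{(2)}})$. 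It then remains to match these two sums with $(-1)^{v_1-1}x^{v_1-1}\tau_{(v_1-1,v_2)}(\mt-[x^{-1}],\ms)$ and $x^{v_2-1}\tau_{(v_1,v_2-1)}(\mt,\ms-[x^{-1}])$.

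The key computational input is the action of the Miwa shift on the weight. Using $\sum_{i\ge1}\tfrac1i(\xi/x)^i=-\log(1-\xi/x)$ one finds $\omega_1(\xi;\mt-[x^{-1}])=\omega_1(\xi;\mt)(1-\xi/x)$, and dually $\omega_2(\xi;\ms-[x^{-1}])=\omega_2(\xi;\ms)(1-\xi/x)$. Through the definition $\pf(i^{(k)},j^{(l)})=m_{i,j}^{(k,l)}$ used in Proposition \ref{prop3.3}, the shift $\mt\mapsto\mt-[x^{-1}]$ thus replaces each first-component index by $i^{(1)}\mapsto i^{(1)}-\tfrac1x(i+1)^{(1)}$ while leaving every second-component index untouched (and conversely for the $\ms$-shift). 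Hence $\tau_{(v_1-1,v_2)}(\mt-[x^{-1}],\ms)=\Pf\big(0^{(1)}-\tfrac1x1^{(1)},\cdots,(v_1-2)^{(1)}-\tfrac1x(v_1-1)^{(1)},0^{(2)},\cdots,(v_2-1)^{(2)}\big)$.

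Next I would expand this Pfaffian by multilinearity in its $v_1-1$ first-component slots, each slot contributing either $i^{(1)}$ or $-\tfrac1x(i+1)^{(1)}$. Any choice producing a repeated index gives a vanishing Pfaffian, and a short telescoping (``staircase'') argument shows that a non-repeating choice must omit exactly one value $m\in\{0,\cdots,v_1-1\}$, forcing slots $0,\cdots,m-1$ to take their original index and slots $m,\cdots,v_1-2$ to take the shifted one; the surviving indices then already appear in increasing order, so the reordering sign is $+1$ and the coefficient is $(-1/x)^{v_1-1-m}$. Summing over $m$ and multiplying by $(-1)^{v_1-1}x^{v_1-1}$ reproduces the bracketed $\omega_1$-sum above, since $(-1)^{v_1-1}(-1)^{v_1-1-m}x^{v_1-1-(v_1-1-m)}=(-1)^{m}x^{m}$. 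The $\ms$-shift is handled identically and yields the factor $(-1)^{v_2-1-m}x^{m}$; this matches the cofactor sign $(-1)^{v_1+m}$ of the $\omega_2$-part precisely because $|\vec{v}|=v_1+v_2$ is odd, so that the $\omega_2$-term carries no extra prefactor, in agreement with the stated formula.

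The main obstacle is the sign bookkeeping. One must fix a single convention for the Pfaffian Laplace expansion, verify that the staircase survivors are genuinely the only non-vanishing multilinear terms and that each carries reordering sign $+1$, and then confirm that the three sources of sign---the cofactor sign, the telescoping power of $-1/x$, and the outer prefactor---combine to $(-1)^{m}x^{m}$ in both parts. The oddness hypothesis enters exactly at this final matching for the $\omega_2$-component, which serves as a reassuring consistency check with Definition \ref{msopdef}.
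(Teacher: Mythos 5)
Your proof is correct, and its opening move --- the Pfaffian Laplace expansion along the index $x$, with cofactor signs $(-1)^{i}$ for the $\omega_1$-part and $(-1)^{v_1+j}$ for the $\omega_2$-part --- is exactly how the paper's proof begins (the paper writes the second sign as $(-1)^{v_2-1-j}$, which agrees with yours precisely because $|\vec{v}|=v_1+v_2$ is odd). Where you genuinely diverge is in identifying the cofactor sums with the Miwa-shifted $\tau$-functions, i.e.\ in proving \eqref{tf}. The paper expands $\tau_{(v_1-1,v_2)}(\mt-[x^{-1}],\ms)=\sum_{k\geq0}p_k(-\tilde{\p}_t)\tau_{(v_1-1,v_2)}\,x^{-k}$ and then invokes the derivative formulas for Wronskian-type Pfaffians (Proposition \ref{propb1}) to show that $p_k(-\tilde{\p}_t)$ acting on the Pfaffian $\tau$-function produces precisely the Pfaffian with one first-component index deleted. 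You instead transfer the Miwa shift onto the weight via $\omega_1(\xi;\mt-[x^{-1}])=\omega_1(\xi;\mt)(1-\xi/x)$, regard the Pfaffian as a multilinear alternating form in its argument functions $\xi^{i}\omega_1,\,\xi^{j}\omega_2$, and use the staircase argument to see that the only surviving terms omit exactly one index $m$, with coefficient $(-1/x)^{v_1-1-m}$ and trivial reordering sign; the sign bookkeeping you carry out, including the role of the oddness of $|\vec{v}|$ in reconciling the $\omega_2$ cofactor sign with the absence of a prefactor, all checks out (the paper relies on the same parity fact but leaves it implicit in its formula for the expansion). Each route has its advantages: yours is more elementary and self-contained, avoiding the appendix machinery and the Schur-polynomial action entirely, and it keeps the whole computation a manifestly finite polynomial identity; the paper's route instead establishes the reusable intermediate fact that $p_\ell(\pm\tilde{\p}_t)$ applied to these $\tau$-functions yields index-shifted or index-deleted Pfaffians, which is then quietly recycled in the remark and corollary following the proposition and again in the Cauchy-transform computation of Proposition \ref{prop5.2}, so with your approach that fact would need to be extracted separately for later use.
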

\begin{proof}
One could prove such a formula by column expansion to the moment matrix, and make use of Schur functions acting on moments; see e.g. \cite[prop 2.2]{adler09}. In our proof we adopt the method by directly acting Schur functions to $\tau$-functions.
Recall that the linear forms of MSOPs admit the Pfaffian expression
\begin{align*}
d_{(v_1,v_2)}^{(i)}R_{(v_1,v_2)}^{(i)}(x;\mt,\ms)=\pf(0^{(1)},\cdots,v_1-1^{(1)},0^{(2)},\cdots,v_2-1^{(2)},x).
\end{align*}
If we expand this formula from $x$, then we have
\begin{align}
\begin{aligned}\label{exp1}
d_{(v_1,v_2)}^{(i)}R_{(v_1,v_2)}^{(i)}(x;\mt,\ms)&=\omega_1(x;\mt)\sum_{i\in I_1} (-1)^i x^i\pf(I_1\backslash\{i\},I_2)\\&+\omega_2(x;\ms)\sum_{i\in I_2}(-1)^{v_2-1-i}x^i \pf(I_1,I_2\backslash\{i\}),
\end{aligned}
\end{align}
and index set $I_1=\{0^{(1)},\cdots,v_1-1^{(1)}\}$ and $I_2=\{0^{(2)},\cdots,v_2-1^{(2)}\}$.
Therefore, to demonstrate the equivalence between \eqref{exp} and \eqref{exp1}, one needs to verify the formula
\begin{align}\label{tf}
x^{v_1-1}\tau_{(v_1-1,v_2)}(\mt-[x^{-1}],\ms)=\sum_{i\in I_1}(-1)^{v_1-1-i}x^i\pf(I_1\backslash\{i\},I_2).
\end{align}
It is known that the left hand side in above formula could be written as
\begin{align*}
\tau_{(v_1-1,v_2)}(\mt-[x^{-1}],\ms)=e^{-\xi(\tilde{\p}_t,x^{-1})}\tau_{(v_1-1,v_2)}=\sum_{k\geq0}p_k(-\tilde{\p}_t)\tau_{(v_1-1,v_2)}x^{-k},
\end{align*}
where $\tilde{\p}_t=(\p_{{t}_1},\p_{{t}_2}/2,\cdots)$, $\xi(\mt,x)=\sum_{i=1}^\infty t_ix^i$ and $p_k$ are elementary symmetric  functions defined by
\begin{align}\label{esf}
e^{\xi(\mt,x)}=\sum_{k\geq 0}p_k(\mt)x^k.
\end{align}
Moreover, due to the Proposition \ref{propb1} in the appendix, we know the fact that
\begin{align*}
p_k(-\tilde{\p}_t)\tau_{(v_1-1,v_2)}=\Pf(0^{(1)},\cdots,\widehat{v_1-k^{(1)}},\cdots,v_1^{(1)},0^{(2)},\cdots,v_2^{(2)}),
\end{align*}
where $\hat{i}$ means that the index $i$ is missed, then equation \eqref{tf} holds.  
\end{proof}

\begin{remark}
According to the proof, we know that
\begin{align*}
d_{(v_1,v_2)}^{(i)}R_{(v_1,v_2)}^{(i)}(x;\mt,\ms)&=(-1)^{v_1-1}\omega_1(x;\mt)\sum_{\ell=0}^{v_1-1}\left(
p_\ell(-\tilde{\p}_t)\tau_{(v_1-1,v_2)}(\mt,\ms)\right)x^{v_1-1-\ell}\\
&+\omega_2(x;\ms)\sum_{\ell=0}^{v_2-1}\left(
p_\ell(-\tilde{\p}_s)\tau_{(v_1,v_2-1)}(\mt,\ms)
\right)x^{v_2-1-\ell}.
\end{align*}
\end{remark}

As a direct corollary, we have
\begin{coro}
 $\tilde{R}_{(v_1,v_2)}^{(i)}(x;\mt,\ms)$ $(i=1,2)$ could be expressed in terms of $\tau$-functions as
\begin{align*}
d_{(v_1,v_2)}^{(1)}&\tilde{R}_{(v_1,v_2)}^{(1)}(x;\mt,\ms)=
\p_{t_1}\left(d_{(v_1,v_2)}^{(1)}\tilde{R}_{(v_1,v_2)}^{(1)}(x;\mt,\ms)\right)\\
&=(-1)^{v_1-1}\omega_1(x;\mt)\sum_{\ell=0}^{v_1-1}\left(\p_{t_1}p_\ell(-\tilde{\p}_t)\tau_{(v_1-1,v_2)}(\mt,\ms)\right)x^{v_1-1-\ell}\\
&+(-1)^{v_1-1}\omega_1(x;\mt)\sum_{\ell=0}^{v_1-1}\left(p_\ell(-\tilde{\p}_t)\tau_{(v_1-1,v_2)}(\mt,\ms)\right)x^{v_1-\ell}\\&+\omega_2(x;\ms)
\sum_{\ell=0}^{v_2-1}\left(\p_{t_1}p_\ell(-\tilde{\p}_s)\tau_{(v_1,v_2-1)}(\mt,\ms)\right)x^{v_2-1-\ell},\\
d_{(v_1,v_2)}^{(2)}&\tilde{R}_{(v_1,v_2)}^{(2)}(x;\mt,\ms)=
\p_{s_1}\left(d_{(v_1,v_2)}^{(2)}\tilde{R}_{(v_1,v_2)}^{(2)}(x;\mt,\ms)\right)\\
&=(-1)^{v_1-1}\omega_1(x;\mt)\sum_{\ell=0}^{v_1-1}\left(\p_{s_1}p_\ell(-\tilde{\p}_t)\tau_{(v_1-1,v_2)}(\mt,\ms)\right)x^{v_1-1-\ell}\\
&+\omega_2(x;\ms)\sum_{\ell=0}^{v_2-1}\left(\p_{s_1}p_\ell(-\tilde{\p}_s)\tau_{(v_1,v_2-1)}(\mt,\ms)\right)x^{v_2-1-\ell}\\
&+\omega_2(x;\ms)\sum_{\ell=0}^{v_2-1}\left(p_\ell(-\tilde{\p}_s)\tau_{(v_1,v_2-1)}(\mt,\ms)\right)x^{v_2-\ell}.
\end{align*}
\end{coro}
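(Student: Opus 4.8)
The plan is to obtain both identities by directly differentiating the $\tau$-function expansion of $R_{(v_1,v_2)}^{(i)}$ supplied by the Remark following Proposition \ref{prop4.1}, combined with the derivative relation \eqref{te} of Proposition \ref{td}. For the first identity, Proposition \ref{td} gives $d_{(v_1,v_2)}^{(1)}\tilde{R}_{(v_1,v_2)}^{(1)}(x;\mt,\ms)=\p_{t_1}\big(d_{(v_1,v_2)}^{(1)}R_{(v_1,v_2)}^{(1)}(x;\mt,\ms)\big)$, so it suffices to apply $\p_{t_1}$ to the explicit two-sum expression for $d_{(v_1,v_2)}^{(1)}R_{(v_1,v_2)}^{(1)}$ recorded in the Remark.

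First I would recall from the semi-classical deformation that $\p_{t_1}\omega_1(x;\mt)=x\,\omega_1(x;\mt)$ while $\p_{t_1}\omega_2(x;\ms)=0$, since $\ms$ carries no $t_1$-dependence. Applying the Leibniz rule to the $\omega_1$-block then produces two contributions: differentiating the coefficient $p_\ell(-\tilde{\p}_t)\tau_{(v_1-1,v_2)}$ yields the term $(-1)^{v_1-1}\omega_1(x;\mt)\sum_\ell\big(\p_{t_1}p_\ell(-\tilde{\p}_t)\tau_{(v_1-1,v_2)}\big)x^{v_1-1-\ell}$, whereas differentiating the weight multiplies each monomial by $x$ and gives $(-1)^{v_1-1}\omega_1(x;\mt)\sum_\ell\big(p_\ell(-\tilde{\p}_t)\tau_{(v_1-1,v_2)}\big)x^{v_1-\ell}$. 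In the $\omega_2$-block only the coefficient is affected, producing $\omega_2(x;\ms)\sum_\ell\big(\p_{t_1}p_\ell(-\tilde{\p}_s)\tau_{(v_1,v_2-1)}\big)x^{v_2-1-\ell}$. Summing the three contributions is exactly the claimed expression.

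The second identity follows in the same way after exchanging the roles of the two sets of variables: using $\p_{s_1}\big(d_{(v_1,v_2)}^{(2)}R_{(v_1,v_2)}^{(2)}\big)=d_{(v_1,v_2)}^{(2)}\tilde{R}_{(v_1,v_2)}^{(2)}$ from \eqref{te}, together with $\p_{s_1}\omega_2(x;\ms)=x\,\omega_2(x;\ms)$ and $\p_{s_1}\omega_1(x;\mt)=0$, the weight-derivative now acts on the $\omega_2$-block and raises its powers, giving the single shifted term with $x^{v_2-\ell}$. There is essentially no obstacle here: the only point requiring care is bookkeeping the index shift caused by differentiating the weight, so that the monomial $x^{v_1-1-\ell}$ (resp. $x^{v_2-1-\ell}$) is promoted to $x^{v_1-\ell}$ (resp. $x^{v_2-\ell}$) in precisely the block whose weight depends on the differentiation variable, while the $\tau$-coefficients in the other block merely inherit the derivative $\p_{t_1}$ (resp. $\p_{s_1}$).
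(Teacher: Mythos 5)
Your proposal is correct and is precisely the argument the paper intends: the statement is presented as a direct corollary of the derivative relation \eqref{te} together with the expansion of $d_{(v_1,v_2)}^{(i)}R_{(v_1,v_2)}^{(i)}$ in the Remark after Proposition \ref{prop4.1}, and your Leibniz-rule computation using $\p_{t_1}\omega_1=x\omega_1$, $\p_{t_1}\omega_2=0$ (and the $s_1$-analogues) reproduces exactly the three displayed terms in each identity. You also correctly read the right-hand side of the first line as $\p_{t_1}\bigl(d_{(v_1,v_2)}^{(1)}R_{(v_1,v_2)}^{(1)}\bigr)$, i.e.\ the derivative acting on the untilded linear form as in \eqref{te}, which is what the statement (with an evident typo) must mean.
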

By taking these expressions into \eqref{sub1}, and comparing the coefficients of $x^{v_1-j}\omega_1(x)$ and $x^{v_2-j}\omega_2(x)$ $(j=1,2,\cdots)$ respectively, we obtain
\begin{subequations}
\begin{align}
\tau_{(v_1,v_2-1)}p_j(-\tilde{\p}_t)\tau_{(v_1+1,v_2)}&=-\tau_{(v_1+1,v_2)}\p_{s_1}p_{j-1}(-\tilde{\p}_t)\tau_{(v_1-1,v_2)}\label{isub1}\\
&+\p_{s_1}\tau_{(v_1+1,v_2)}p_{j-1}(-\tilde{\p}_t)\tau_{(v_1-1,v_2)}+\tau_{(v_1,v_2+1)}p_j(-\tilde{\p}_t)\tau_{(v_1,v_2-1)},\nonumber\\
\tau_{(v_1,v_2-1)}p_j(-\tilde{\p}_s)\tau_{(v_1+1,v_2)}&=\tau_{(v_1+1,v_2)}\left(
\p_{s_1}p_{j-1}(-\tilde{\p}_s)+p_j(-\tilde{\p}_s)
\right)\tau_{(v_1,v_2-1)}\label{isub2}\\
&-\p_{s_1}\tau_{(v_1+1,v_2)}p_{j-1}(-\tilde{\p}_s)\tau_{(v_1,v_2-1)}+\tau_{(v_1,v_2+1)}p_{j-2}(-\tilde{\p}_s)\tau_{(v_1+1,v_2-2)}.\nonumber
\end{align}
\end{subequations}
Moreover, we read from \eqref{sub3} that
\begin{subequations}
\begin{align}
\p_{t_1}\tau_{(v_1,v_2-1)}p_{j-1}(-\tilde{\p}_t)\tau_{(v_1-1,v_2)}&=\tau_{(v_1,v_2-1)}\left(
\p_{t_1}p_{j-1}(-\tilde{\p}_t)+p_j(-\tilde{\p}_t)\right)\tau_{(v_1-1,v_2)}\label{isub3}\\
&-\tau_{(v_1-1,v_2)}p_j(-\tilde{\p}_t)\tau_{(v_1,v_2-1)}+\tau_{(v_1+1,v_2)}p_{j-2}(-\tilde{\p}_t)\tau_{(v_1-2,v_2-1)},\nonumber\\
\p_{t_1}\tau_{(v_1,v_2-1)}p_{j-1}(-\tilde{\p}_s)\tau_{(v_1,v_2-1)}&=\tau_{(v_1,v_2-1)}\p_{t_1}p_{j-1}(-\tilde{\p}_s)\tau_{(v_1,v_2-1)}\label{isub4}\\
&+\tau_{(v_1-1,v_2)}p_{j-2}(-\tilde{\p}_s)\tau_{(v_1+1,v_2-2)}-\tau_{(v_1+1,v_2)}p_{j-2}(-\tilde{\p}_s)\tau_{(v_1-1,v_2-2)}.\nonumber
\end{align}
\end{subequations}
It should be remarked that integrable hierarchies \eqref{isub1}-\eqref{isub2} and \eqref{isub3}-\eqref{isub4} are the same if one interchanges $v_1$ with $v_2$ and $\p_t$ with $\p_s$. Moreover, integrable hierarchies derived from \eqref{sub2} and \eqref{sub4} are the same with \eqref{sub1} and \eqref{sub3}. Therefore, it is reasonable to regard \eqref{isub1}-\eqref{isub4} as a 2-component Pfaff lattice hierarchy with neighboring lattices. 

There are some integrable lattices obtained from those hierarchies. The first equation of Pfaff-Toda lattice in \eqref{pfafftoda} could be obtained from \eqref{isub1} by taking $j=1$, and the second one could be obtained from \eqref{isub4} by taking $j=2$. Besides, one could obtain another nontrivial simple example in \eqref{isub2} when $j=2$, which reads
\begin{align}\label{mckp}
(D_{s_2}+D_{s_1}^2)\tau_{(v_1,v_2-1)}\cdot\tau_{(v_1+1,v_2)}=2\tau_{(v_1,v_2+1)}\tau_{(v_1+1,v_2-2)}.
\end{align}
This is the bilinear form of the so-called modified coupled KP equation, which plays an important role in the study of commutativity of Pfaffianization and B\"acklund transformation \cite{hu052}.
\subsection{Bilinear identities: from bilinear form to Cauchy transforms}\label{sec4.2}
In last subsection, we derived a 2-component Pfaff lattice hierarchy by directly using the recurrence relations of MSOPs, which involve neighboring $\tau$-functions.
In this part, we find another approach to deduce more general integrable lattice hierarchies from the perspective of Cauchy transforms. To this end, we introduce a Cauchy transform with respect to a non-degenerate bilinear form.
\begin{proposition}\label{prop5.1}
Given a non-degenerate bilinear form $\langle\cdot,\cdot\rangle:\,\mathbb{R}[x]\times\mathbb{R}[y]\to\mathbb{R}$ and an analytic weight function $\psi(x)$, then for an integrable function $g(x)$, a Cauchy transform of $g(x)$ with respect to the bilinear form is defined by
\begin{align*}
\mc_\psi g(z)=\left\langle \frac{\psi(x)}{x-z},g(y)\right\rangle.
\end{align*}
Moreover, for any analytic function $f(x)$, one has
\begin{align*}
\langle f(x)\psi(x),g(y)\rangle=\frac{1}{2\pi i}\oint_{C_\infty}f(z)\mc_\psi g(z)dz,
\end{align*}
where $C_\infty$ is a circle around the infinity.
\end{proposition}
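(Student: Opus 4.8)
The plan is to reduce the identity to the Cauchy integral formula by pushing the contour integral inside the bilinear form. First I would substitute the definition $\mc_\psi g(z)=\langle \psi(x)/(x-z),\,g(y)\rangle$ into the right-hand side to obtain
\begin{align*}
\frac{1}{2\pi i}\oint_{C_\infty}f(z)\mc_\psi g(z)\,dz=\frac{1}{2\pi i}\oint_{C_\infty}f(z)\left\langle \frac{\psi(x)}{x-z},\,g(y)\right\rangle dz.
\end{align*}
Since $\langle\cdot,\cdot\rangle$ is linear in its first slot and a contour integral is a limit of linear combinations of its integrand, I would interchange the integral with the bilinear form, arriving at $\left\langle \frac{1}{2\pi i}\oint_{C_\infty}\frac{f(z)\psi(x)}{x-z}\,dz,\,g(y)\right\rangle$.

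The next step is to evaluate the inner contour integral for each fixed $x$ in the support. Viewed as a function of $z$, the integrand $f(z)\psi(x)/(x-z)$ has, apart from the analytic factor $f$, a single simple pole at $z=x$; since $C_\infty$ encircles infinity it encloses every finite $x$, so the residue theorem gives $\frac{1}{2\pi i}\oint_{C_\infty}\frac{f(z)}{x-z}\,dz=f(x)$ with the stated orientation. Hence the inner integral is $f(x)\psi(x)$, and substituting back produces $\langle f(x)\psi(x),\,g(y)\rangle$, the left-hand side.

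Equivalently --- and this is the version I would write out, as it is transparent and matches the moment bookkeeping used throughout the paper --- I would expand the Cauchy kernel $\psi(x)/(x-z)=-\sum_{k\ge 0}x^k\psi(x)\,z^{-k-1}$ for large $|z|$, so that $\mc_\psi g(z)=-\sum_{k\ge 0}\langle x^k\psi(x),\,g(y)\rangle z^{-k-1}$ is a Laurent series at infinity whose coefficients are exactly the $\psi$-weighted moments of the form. Writing $f(z)=\sum_{j\ge 0}f_j z^j$, multiplying, and extracting the residue at infinity term by term then forces $k=j$ and rebuilds $\sum_j f_j\langle x^j\psi(x),\,g(y)\rangle=\langle f(x)\psi(x),\,g(y)\rangle$, the sign from the kernel expansion cancelling the sign from the orientation of $C_\infty$.

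The only delicate point is the legitimacy of exchanging the contour integral with the bilinear form, equivalently with the infinite sum over moments. For polynomial $f$ this is a finite manipulation and the identity is exact; for general analytic $f$ I would treat $\oint_{C_\infty}$ as the formal residue-extraction operator acting on the Laurent expansion of $\mc_\psi g$, which is the standard convention here and needs only that the moments $\langle x^k\psi(x),\,g(y)\rangle$ be finite --- guaranteed by the hypotheses that $\psi$ is analytic and $g$ is integrable. This interchange/convergence step is the main obstacle; the remaining ingredients are just the geometric expansion of the Cauchy kernel and a residue count.
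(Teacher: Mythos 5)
Your proposal is correct and its second, written-out version is essentially the paper's own proof: expand $f(z)=\sum_{i\geq 0} f_i z^i$, expand $\mc_\psi g(z)$ as a Laurent series at infinity whose coefficients are the moments $\langle x^j\psi(x),g(y)\rangle$, and extract the residue term by term to force $i=j$. If anything, you are more careful than the paper, which writes the expansion of $\mc_\psi g$ without the minus sign from the Cauchy kernel and lets the orientation of $C_\infty$ absorb it silently, whereas you make that cancellation explicit.
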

\begin{proof}
By assuming that $f(z)$ is analytic, we have the expansion $f(z)=\sum_{i=0}^\infty f_iz^i$, and thus
\begin{align*}
\frac{1}{2\pi i}\oint_{C_\infty} f(z)\mc_\psi g(z)dz&=\frac{1}{2\pi i}\oint_{C_\infty}\sum_{i=0}^\infty f_iz^i\sum_{j=0}^\infty \frac{1}{z^{j+1}}\langle x^j\psi(x),g(y)\rangle dz\\
&=\sum_{i=0}^\infty f_i\langle x^i\psi(x),g(y)\rangle=\langle f(x)\psi(x),g(y)\rangle.
\end{align*}
\end{proof}
Therefore, by taking $\langle\cdot,\cdot\rangle$ as a skew symmetric bilinear form, i.e.
\begin{align*}
\langle f(x),g(y)\rangle=\int_{\gamma\times\gamma}f(x)\s(x,y)g(y)dxdy,\quad \s(x,y)=-\s(y,x),
\end{align*}
one could define a corresponding Cauchy transform 
\begin{align}\label{ct}
\mc_\psi g(z)=\int_{\gamma\times \gamma}\frac{\psi(z)}{x-z}\s(x,y)g(y)dxdy.
\end{align}
Moreover, we have the following statement which is about the Cauchy transform of MSOPs with skew inner product.
\begin{proposition}\label{prop5.2}
If $R_{(v_1,v_2)}^{(i)}(x;\mt,\ms)$ $(i=1,2)$ are linear forms of multiple skew orthogonal polynomials defined in Proposition \ref{prop3.2} with weights $\omega_1(x;\mt)$ and $\omega_2(x;\ms)$, then we have 
\begin{align*}
&\mc_{\omega_1} \left(
d_{(v_1,v_2)}^{(i)}R_{(v_1,v_2)}^{(i)}
\right)=(-1)^{v_1}z^{-(v_1+1)}\tau_{(v_1+1,v_2)}(\mt+[z^{-1}],\ms),\\
&\mc_{\omega_2} \left(
d_{(v_1,v_2)}^{(i)}R_{(v_1,v_2)}^{(i)}
\right)=z^{-(v_2+1)}\tau_{(v_1,v_2+1)}(\mt,\ms+[z^{-1}]).
\end{align*}
\end{proposition}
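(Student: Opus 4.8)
The plan is to reduce both formulas to a single Pfaffian reindexing. First I would record that, by \eqref{msop1} and \eqref{msop2}, the two scaled linear forms are literally the same Pfaffian,
\[
d_{(v_1,v_2)}^{(i)}R_{(v_1,v_2)}^{(i)}(x;\mt,\ms)=\Pf(0^{(1)},\cdots,v_1-1^{(1)},0^{(2)},\cdots,v_2-1^{(2)},x)=:g(x),\qquad i=1,2,
\]
so the Cauchy transform does not depend on $i$ and it suffices to evaluate $\mc_{\omega_1}g$ and $\mc_{\omega_2}g$. Throughout write $I_1=\{0^{(1)},\cdots,v_1-1^{(1)}\}$ and $I_2=\{0^{(2)},\cdots,v_2-1^{(2)}\}$, so that $g(y)=\Pf(I_1,I_2,y)$ with $\pf(a^{(l)},y)=\omega_l(y)y^{a}$.

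The second step is to linearise the Cauchy transform into a Pfaffian. Expanding the kernel at infinity, $\tfrac{1}{x-z}=-\sum_{k\geq0}x^{k}z^{-k-1}$, and using $\langle x^{k}\omega_1(x;\mt),y^{a}\omega_l(y)\rangle=\pf(k^{(1)},a^{(l)})$, the pairing of the Cauchy kernel against the single column $\pf(a^{(l)},y)$ produces a new symbol $z^{*}$ with entries
\[
\pf(a^{(l)},z^{*}):=\left\langle \frac{\omega_1(x;\mt)}{x-z},\,y^{a}\omega_l(y)\right\rangle=-\sum_{k\geq0}\frac{\pf(k^{(1)},a^{(l)})}{z^{k+1}}.
\]
Expanding $g(y)$ along its last entry and pairing term by term against $\tfrac{\omega_1(x;\mt)}{x-z}$ replaces every $\pf(\,\cdot\,,y)$ by the corresponding $\pf(\,\cdot\,,z^{*})$, whence $\mc_{\omega_1}g(z)=\Pf(I_1,I_2,z^{*})$.

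The third step is to expand this Pfaffian in the series defining $z^{*}$ and to recognise a Miwa shift. By multilinearity in the last column, $\Pf(I_1,I_2,z^{*})$ becomes a $z^{-1}$-series of Pfaffians $\Pf(I_1,I_2,k^{(1)})$; every term with $k\leq v_1-1$ vanishes because $k^{(1)}$ already occurs in $I_1$, and for $k\geq v_1$ moving the inserted index back into the first-family block across the $v_2$ entries of $I_2$ contributes a sign $(-1)^{v_2}$. The surviving sum is then identified through the positive-shift counterpart of the Wronskian--Pfaffian identity already used in the proof of Proposition \ref{prop4.1}: since the flows $\p_{t_i}$ raise first-family indices (Proposition \ref{prop3.3}), for the staircase set $\{0^{(1)},\cdots,v_1^{(1)}\}$ one has $p_{\ell}(\tilde{\p}_t)\tau_{(v_1+1,v_2)}=\Pf(0^{(1)},\cdots,v_1-1^{(1)},v_1+\ell^{(1)},0^{(2)},\cdots,v_2-1^{(2)})$, so with $k=v_1+\ell$
\[
\sum_{k\geq v_1}z^{-(k+1)}\Pf(0^{(1)},\cdots,v_1-1^{(1)},k^{(1)},0^{(2)},\cdots,v_2-1^{(2)})=z^{-(v_1+1)}\tau_{(v_1+1,v_2)}(\mt+[z^{-1}],\ms).
\]
Reinstating the overall prefactor from the previous displays and invoking the parity of $v_1+v_2$ then produces the claimed factor $(-1)^{v_1}z^{-(v_1+1)}$ in front of $\tau_{(v_1+1,v_2)}(\mt+[z^{-1}],\ms)$. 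The $\omega_2$ formula follows verbatim under the duality $\mt\leftrightarrow\ms$, $\omega_1\leftrightarrow\omega_2$, $(1)\leftrightarrow(2)$; the only structural difference is that the inserted second-family index already lies at the end of its block, so no reordering sign arises and the prefactor is $+1$ instead of $(-1)^{v_1}$.

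The delicate point, and the step I expect to demand the most care, is the sign bookkeeping: one must combine consistently the sign from the geometric expansion of $\tfrac{1}{x-z}$, the Pfaffian antisymmetry entering the definition of $\pf(a^{(l)},z^{*})$, the $(-1)^{v_2}$ from reordering the inserted index past the second block, and the parity of $|\vec v|=v_1+v_2$, and verify that they collapse to exactly $(-1)^{v_1}$ (respectively $+1$). The analytic ingredients---interchanging the series for $\tfrac{1}{x-z}$ with the double integral on $\gamma\times\gamma$ and working with the Laurent expansion on $C_\infty$---are routine once all moments are assumed finite.
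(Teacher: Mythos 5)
Your plan is structurally the paper's own proof: both expand the Cauchy kernel in a geometric series at infinity, pair it term by term against the Pfaffian expansion of the linear form, kill the terms with $k\leq v_1-1$ by the repeated-index argument, pick up $(-1)^{v_2}$ from reordering the inserted first-family index, and identify the tail as a positive Miwa shift via $p_\ell(\tilde{\p}_t)$. Your column $z^{*}$ is only a cosmetic repackaging of the paper's term-by-term computation, and both arguments invoke the same (unproved but standard) positive-shift counterpart of the Wronskian--Pfaffian lemma.

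The genuine gap is exactly the sign verification you postpone: it cannot be completed so as to produce the two stated prefactors simultaneously. In your own setup, the first two signs on your list cancel each other:
\begin{align*}
\pf(a^{(l)},z^{*})=-\sum_{k\geq0}z^{-(k+1)}\pf(k^{(1)},a^{(l)})=+\sum_{k\geq0}z^{-(k+1)}\pf(a^{(l)},k^{(1)}),
\end{align*}
because the expansion of $\Pf(I_1,I_2,z^{*})$ along its last index uses entries with the inserted index in the \emph{second} slot. Hence, by multilinearity and the repeated-index vanishing,
\begin{align*}
\mc_{\omega_1}g(z)=\sum_{k\geq v_1}z^{-(k+1)}\Pf(I_1,I_2,k^{(1)})=(-1)^{v_2}z^{-(v_1+1)}\tau_{(v_1+1,v_2)}(\mt+[z^{-1}],\ms),
\end{align*}
and since $v_1+v_2$ is odd the prefactor is $(-1)^{v_2}=(-1)^{v_1+1}$, not $(-1)^{v_1}$; the identical bookkeeping for $\mc_{\omega_2}$ (where no reordering occurs) gives $+1$, as you say. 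Observe that every convention-dependent sign (the choice of $1/(x-z)$ versus $1/(z-x)$, or any sign ambiguity in normalizing $g$) enters \emph{both} transforms equally, while the reordering sign $(-1)^{v_2}$ enters only the first; so under any single consistent convention the ratio of the two prefactors is forced to be $(-1)^{v_2}$, whereas the proposition asserts the ratio $(-1)^{v_1}$. The two displayed formulas are therefore mutually inconsistent as stated, and no appeal to the parity of $|\vec{v}|$ can close your argument as written. You should know this defect is inherited from the paper rather than introduced by you: the paper's proof of Proposition \ref{prop5.2} identifies $\langle\pf(i^{(1)},x),\Pf(I_1,I_2,y)\rangle$ with $+\Pf(I_1,I_2,i^{(1)})$, dropping precisely the transposition sign $\pf(i^{(1)},a)=-\pf(a,i^{(1)})$ (contrast the proof of Proposition \ref{prop3.2}, where the inserted index sits in the first slot of every entry, so no flip is needed and that computation is sound), and the proof of Proposition \ref{prop5.1} expands the same kernel with the opposite sign $+\sum_{j\geq0}x^{j}z^{-j-1}$. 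A correct completion of your argument proves the proposition with prefactors $\bigl((-1)^{v_2},+1\bigr)$ for the kernel $\psi(x)/(x-z)$ as defined, or $\bigl((-1)^{v_1},-1\bigr)$ if one redefines $\mc_{\psi}$ with the kernel $\psi(x)/(z-x)$.
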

\begin{proof}
We prove the first equation, and the second one could be similarly verified.
By using \eqref{msop1} and \eqref{ct}, we have
\begin{align*}
\mc_{\omega_1} &\left(
d_{(v_1,v_2)}^{(i)}R_{(v_1,v_2)}^{(i)}
\right)=\int_{\gamma\times\gamma}\frac{\omega_1(x;\mt)}{x-z}\s(x,y)\Pf(0^{(1)},\cdots,v_1-1^{(1)},0^{(2)},\cdots,v_2-1^{(2)},y)dxdy\\
&=-\sum_{i=0}^\infty z^{-(i+1)}\int_{\gamma\times\gamma}x^i\omega_1(x;\mt)\s(x,y)\Pf(0^{(1)},\cdots,v_1-1^{(1)},0^{(2)},\cdots,v_2-1^{(2)},y)dxdy\\
&=-\sum_{i=0}^\infty z^{-(i+1)}\left\langle \pf(i^{(1)},x),\pf(0^{(1)},\cdots,v_1-1^{(1)},0^{(2)},\cdots,v_2-1^{(2)},y)\right\rangle.
\end{align*}
Then from the skew orthogonality, when $i\leq v_1-1$ the above skew inner product is equal to zero. Therefore, the above formula is equal to 
\begin{align*}
&-\sum_{i=v_1}^\infty (-1)^{v_2}z^{-(i+1)}\Pf(0^{(1)},\cdots,v_1-1^{(1)},i^{(1)},0^{(2)},\cdots,v_2-1^{(2)})\\
&=(-1)^{v_1}z^{-(v_1+1)}\sum_{i=0}^\infty z^{-i}p_{i}(\tilde{\p}_t)\tau_{(v_1+1,v_2)},
\end{align*}
which is the expansion of the desired formula.
\end{proof}
In the followings, we show how to derive integrable hierarchy and bilinear identities by Cauchy transforms.
\begin{proposition}
Two-component $\tau$-functions $\{\tau_{(i,j)}(\mt,\ms)\}_{i,j\in\mathbb{N}}$ with $i+j\in2\mathbb{N}$ satisfy bilinear identity
\begin{align}\label{bi}
\begin{aligned}
&(-1)^{u_1+v_1}\oint_{C_\infty} e^{\xi(t-t',z)}z^{v_1-u_1-2}\tau_{(v_1-1,v_2)}(\mt-[z^{-1}],\ms)\tau_{(u_1+1,u_2)}(\mt'+[z^{-1}],\ms')dz\\
&\qquad\qquad+(-1)^{u_1+v_1}\oint_{C_\infty}e^{\xi(t'-t,z)}z^{u_1-v_1-2}\tau_{(v_1+1,v_2)}(\mt+[z^{-1}],\ms)\tau_{(u_1-1,u_2)}(\mt'-[z^{-1}],\ms')dz\\
&=\oint_{C_\infty} e^{\xi(s-s',z)}z^{v_2-u_2-2}\tau_{(v_1,v_2-1)}(\mt,\ms-[z^{-1}])\tau_{(u_1,u_2+1)}(\mt',\ms'+[z^{-1}])dz\\
&\qquad\qquad+\oint_{C_\infty}e^{\xi(s'-s,z)}z^{u_2-v_2-2}\tau_{(v_1,v_2+1)}(\mt,\ms+[z^{-1}])\tau_{(u_1,u_2-1)}(\mt',\ms'-[z^{-1}])dz,
\end{aligned}
\end{align}
which is valid for arbitrary $t,t',s,s'\in\mathbb{C}$.
\end{proposition}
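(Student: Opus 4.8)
The plan is to evaluate one and the same skew inner product of two MSOP linear forms carrying \emph{independent} sets of deformation times in two different ways, and to read off \eqref{bi} by equating the two resulting contour-integral expressions. Concretely, I would fix the two linear forms $F(x):=R_{(v_1,v_2)}^{(i)}(x;\mt,\ms)$ and $G(y):=R_{(u_1,u_2)}^{(j)}(y;\mt',\ms')$ and consider the bracket $\langle F,G\rangle=\int_{\gamma\times\gamma}F(x)\s(x,y)G(y)\,dxdy$ for the skew-symmetric kernel $\s$. The whole argument rests on computing this single quantity twice; its actual value will never be needed.

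First I would integrate out the $x$-variable. By Proposition \ref{prop4.1}, formula \eqref{exp}, the form $F$ splits into an $\omega_1$-piece and an $\omega_2$-piece, each of the shape (polynomial in $x$)$\times\,\omega_k$. The subtlety is that Proposition \ref{prop5.2} delivers the Cauchy transform only when the integrating weight and the form $G$ carry the \emph{same} time, whereas $F$ lives at $(\mt,\ms)$ and $G$ at $(\mt',\ms')$. I would therefore rewrite $\omega_1(x;\mt)=e^{\xi(\mt-\mt',x)}\omega_1(x;\mt')$ and $\omega_2(x;\ms)=e^{\xi(\ms-\ms',x)}\omega_2(x;\ms')$, absorbing the analytic factors $e^{\xi(\mt-\mt',x)}$ and $e^{\xi(\ms-\ms',x)}$ into the polynomial prefactors so that the integrating weights now match $G$'s time. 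Proposition \ref{prop5.1} then turns each bracket into a contour integral around infinity, and Proposition \ref{prop5.2} evaluates $\mc_{\omega_1}G$ and $\mc_{\omega_2}G$ as $(-1)^{u_1}z^{-(u_1+1)}\tau_{(u_1+1,u_2)}(\mt'+[z^{-1}],\ms')$ and $z^{-(u_2+1)}\tau_{(u_1,u_2+1)}(\mt',\ms'+[z^{-1}])$. Multiplying by the absorbed prefactors reproduces, up to the common factor $(d_{(v_1,v_2)}^{(i)}d_{(u_1,u_2)}^{(j)})^{-1}$, exactly the first term on the left of \eqref{bi} (from the $\omega_1$-piece, carrying $e^{\xi(t-t',z)}$) and the first term on the right (from the $\omega_2$-piece, carrying $e^{\xi(s-s',z)}$). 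This is precisely where the exponential factors in \eqref{bi} come from.

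Next I would invoke the skew-symmetry $\langle F,G\rangle=-\langle G,F\rangle$ and run the identical computation for $\langle G,F\rangle$, now integrating out the variable attached to $G$. Interchanging the roles $(v_1,v_2,\mt,\ms,i)\leftrightarrow(u_1,u_2,\mt',\ms',j)$ converts the two Cauchy transforms into $\tau_{(v_1+1,v_2)}(\mt+[z^{-1}],\ms)$ and $\tau_{(v_1,v_2+1)}(\mt,\ms+[z^{-1}])$, producing exactly the second term on the left and the second term on the right of \eqref{bi}. Carrying the extra minus sign from skew-symmetry through, this second evaluation gives $d^{(i)}d^{(j)}\langle F,G\rangle=\tfrac{1}{2\pi i}\oint(\mathrm{LHS}_2-\mathrm{RHS}_2)\,dz$, while the first gave $d^{(i)}d^{(j)}\langle F,G\rangle=\tfrac{1}{2\pi i}\oint(-\mathrm{LHS}_1+\mathrm{RHS}_1)\,dz$. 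Equating these two representations of the same number and rearranging yields $\oint(\mathrm{LHS}_1+\mathrm{LHS}_2)\,dz=\oint(\mathrm{RHS}_1+\mathrm{RHS}_2)\,dz$, which is \eqref{bi}.

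The main obstacle I anticipate is the bookkeeping around the time-mismatch: checking that absorbing $e^{\xi(\mt-\mt',x)}$ keeps the prefactor genuinely analytic (a power series in non-negative powers of $z$) so that Proposition \ref{prop5.1} legitimately applies, and tracking the signs $(-1)^{v_1-1}$ and $(-1)^{u_1}$ coming from \eqref{exp} and Proposition \ref{prop5.2} so that they combine into the $(-1)^{u_1+v_1}$ prefactors of \eqref{bi}. Everything else is a routine substitution of the Miwa-shifted $\tau$-function expressions. Since all steps are formal manipulations of analytic series in $z$ in which the times appear only as parameters, the resulting identity holds for arbitrary $t,t',s,s'\in\mathbb{C}$, as claimed.
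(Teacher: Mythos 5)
Your proposal is correct and follows essentially the same route as the paper: the paper's proof likewise starts from the skew-symmetry $\langle R^{(1)}_{(v_1,v_2)}(x;\mt,\ms),R^{(1)}_{(u_1,u_2)}(y;\mt',\ms')\rangle=-\langle R^{(1)}_{(u_1,u_2)}(x;\mt',\ms'),R^{(1)}_{(v_1,v_2)}(y;\mt,\ms)\rangle$, expands the first argument on each side via Proposition \ref{prop4.1}, handles the time mismatch through $\omega_1(x;\mt)=e^{\xi(x;\mt-\mt')}\omega_1(x;\mt')$, and then applies Propositions \ref{prop5.1} and \ref{prop5.2} exactly as you describe. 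Your ``compute the same bracket twice'' framing is just a rewording of that argument, and your sign bookkeeping ($(-1)^{v_1-1}(-1)^{u_1}=-(-1)^{u_1+v_1}$, etc.) matches the paper's.
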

\begin{proof}
Since $\langle\cdot,\cdot\rangle$ is a skew inner product, we know that
\begin{align*}
\langle R_{(v_1,v_2)}^{(1)}(x;\mt,\ms),R^{(1)}_{(u_1,u_2)}(y;\mt',\ms')\rangle=-\langle R_{(u_1,u_2)}^{(1)}(x;\mt',\ms'),R_{(v_1,v_2)}^{(1)}(y;\mt,\ms)\rangle,
\end{align*}
holds true for arbitrary $t,t',s,s'\in\mathbb{C}$ and $|\vec{u}|,\,|\vec{v}|\in2\mathbb{N}+1$. By multiplying $d_{(u_1,u_2)}^{(1)}d_{(v_1,v_2)}^{(1)}$ on both sides and expanding the linear forms of MSOPs in terms of $\tau$-function according to Prop. \ref{prop4.1}, we have
\begin{align*}
&(-1)^{v_1-1}\left\langle x^{v_1-1}\tau_{(v_1-1,v_2)}(\mt-[x^{-1}],\ms)e^{\xi(t,x)}\omega_1(x),d_{(u_1,u_2)}^{(1)}R_{(u_1,u_2)}(y;\mt',\ms')\right\rangle\\
&\quad +\left\langle x^{v_2-1}\tau_{(v_1,v_2-1)}(\mt,\ms-[x^{-1}])e^{\xi(s,x)}\omega_2(x),d_{(u_1,u_2)}^{(1)}R_{(u_1,u_2)}(y;\mt',\ms')\right\rangle\\
& =(-1)^{u_1}\left\langle x^{u_1-1}\tau_{(u_1-1,u_2)}(\mt'-[x^{-1}],\ms')e^{\xi(t',x)}\omega_1(x),d_{(v_1,v_2)}^{(1)}R_{(v_1,v_2)}(y;\mt,\ms)\right\rangle\\
&\quad -\left\langle x^{u_2-1}\tau_{(u_1,u_2-1)}(\mt',\ms'-[x^{-1}])e^{\xi(s',x)}\omega_2(x),d_{(v_1,v_2)}^{(1)}R_{(v_1,v_2)}(y;\mt,\ms)\right\rangle.
\end{align*}
Then by realizing that $\omega_1(x;\mt)=e^{\xi(x;\mt-\mt')}\omega_1(x;\mt')$ and according to Proposition \ref{prop5.1}, we have
\begin{align*}
&(-1)^{v_1-1}\frac{1}{2\pi i}\oint_{C_\infty} e^{\xi(t-t',z)}z^{v_1-1}\tau_{(v_1-1,v_2)}(\mt-[z^{-1}],\ms)\mc_{\omega_1}\left(
d_{(u_1,u_2)}^{(1)}R_{(u_1,u_2)}^{(1)}
\right)(z;\mt',\ms')dz\\
&\quad +\frac{1}{2\pi i}\oint_{C_\infty}e^{\xi(s-s',z)}z^{v_2-1}\tau_{(v_1,v_2-1)}(\mt,\ms-[z^{-1}])\mc_{\omega_2}\left(
d_{(u_1,u_2)}^{(1)}R_{(u_1,u_2)}^{(1)}
\right)(z;\mt',\ms')dz\\
&=(-1)^{u_1}\frac{1}{2\pi i}\oint_{C_\infty} e^{\xi(t'-t,z)}z^{u_1-1}\tau_{(u_1-1,u_2)}(\mt'-[z^{-1}],\ms')\mc_{\omega_1}\left(
d_{(v_1,v_2)}^{(1)}R_{(v_1,v_2)}^{(1)}
\right)(z;\mt,\ms)dz\\
&\quad-\frac{1}{2\pi i}\oint_{C_\infty} e^{\xi(s'-s,z)}z^{u_2-1}\tau_{(u_1,u_2-1)}(\mt',\ms'-[z^{-1}])\mc_{\omega_2}\left(
d_{(v_1,v_2)}^{(1)}R_{(v_1,v_2)}^{(1)}
\right)(z;\mt,\ms)dz,
\end{align*}
By substituting Cauchy transforms in Prop. \ref{prop5.2} into above formula, we complete the proof.
\end{proof}

\begin{remark}
Bilinear identity \eqref{bi} should coincide with \cite[eq. (2.1)]{takasaki09} if one changes $z$ to $z^{-1}$ and transforms the contour around the infinity into a circle around zero.
\end{remark}

If we take the variable transformations
\begin{align*}
\mt\mapsto \mt-\alpha,\,\mt'\mapsto \mt+\alpha,\, \ms\mapsto \ms-\beta, \,\ms'\mapsto \ms+\beta,
\end{align*}
and realizes
\begin{align*}
\tau_{(m,n)}(\mt+\alpha+[z^{-1}],\ms+\beta)\tau_{(u,v)}(\mt-\alpha-[z^{-1}],\ms-\beta)=e^{\sum_{i=1}^\infty \alpha_iD_{t_i}+\beta_iD_{s_i}+\xi(\tilde{D}_t,z^{-1})}\tau_{(m,n)}\tau_{(u,v)}
\end{align*}
for arbitrary $m+n,u+v\in2\mathbb{N}+1$,
then the identity \eqref{bi} becomes 
\begin{align*}
&(-1)^{u_1+v_1}
\oint_{C_\infty} e^{-2\xi(\alpha,z)}z^{v_1-u_1-2}e^{\sum_{i=1}^\infty (\alpha_iD_{t_i}+\beta_iD_{s_i})-\xi(\tilde{D}_t,z^{-1})}\tau_{(u_1+1,u_2)}\cdot\tau_{(v_1-1,v_2)}dz\\
&\qquad+(-1)^{u_1+v_1}\oint_{C_\infty} e^{2\xi(\alpha,z)}z^{u_1-v_1-2}e^{\sum_{i=1}^\infty (\alpha_iD_{t_i}+\beta_iD_{s_i})+\xi(\tilde{D}_t,z^{-1})}\tau_{(u_1-1,u_2)}\cdot\tau_{(v_1+1,v_2)}\\
&=\oint_{C_\infty} e^{-2\xi(\beta,z)}z^{v_2-u_2-2}e^{\sum_{i=1}^\infty (\alpha_iD_{t_i}+\beta_iD_{s_i})+\xi(\tilde{D}_s,z^{-1})}\tau_{(u_1,u_2+1)}\cdot\tau_{(v_1,v_2-1)}dz\\
&\qquad+\oint_{C_\infty} e^{2\xi(\beta,z)}z^{u_2-v_2-2}e^{\sum_{i=1}^\infty (\alpha_iD_{t_i}+\beta_iD_{s_i})-\xi(\tilde{D}_s,z^{-1})}\tau_{(u_1,u_2-1)}\cdot\tau_{(v_1,v_2+1)}dz.
\end{align*}
Therefore, according to the residue theorem, it is equivalent to 
\begin{align*}
&(-1)^{u_1+v_1}\sum_{n=0}^\infty e_n(-2\alpha)e_{n+v_1-u_1-1}(\tilde{D}_t)e^{\sum_{i=1}^\infty \alpha_iD_{t_i}+\beta_iD_{s_i}}\tau_{(u_1+1,u_2)}\cdot \tau_{(v_1-1,v_2)}\\
&\qquad +(-1)^{u_1+v_1}\sum_{n=0}^\infty e_n(2\alpha)e_{n+u_1-v_1-1}(-\tilde{D}_t)e^{\sum_{i=1}^\infty \alpha_iD_{t_i}+\beta_iD_{s_i}}\tau_{(u_1-1,u_2)}\cdot \tau_{(v_1+1,v_2)}\\
&=\sum_{n=0}^\infty e_n(-2\beta)e_{n+v_2-u_2-1}(\tilde{D}_s)e^{\sum_{i=1}^\infty \alpha_iD_{t_i}+\beta_iD_{s_i}}\tau_{(u_1,u_2+1)}\cdot \tau_{(v_1,v_2-1)}\\
&\qquad +\sum_{n=0}^\infty e_n(2\beta)e_{n+u_2-v_2-1}(-\tilde{D}_s)e^{\sum_{i=1}^\infty \alpha_iD_{t_i}+\beta_iD_{s_i}}\tau_{(u_1,u_2-1)}\cdot \tau_{(v_1,v_2+1)},
\end{align*}
where $\{e_k\}_{k\geq0}$ are elementary symmetric functions defined by \eqref{esf} and $D_t,\, D_s$ are bilinear operators given by \eqref{bo}.

Therefore, by comparing with the coefficients of $\alpha_1^m\beta_1^n$ for $m,n\geq 0$, we obtain the following integrable lattice hierarchies
\begin{align}\label{lattice}
\begin{aligned}
&(-1)^{u_1+v_1}
\frac{1}{n!}D_{s_1}^n \left(\sum_{k+l=m,k,l\geq0}\frac{(-2)^k}{l!} p_{k+v_1-u_1-1}(\tilde{D}_t)D_{t_1}^l
\right)\tau_{(u_1+1,u_2)}\cdot\tau_{(v_1-1,v_2)}\\
&\qquad+(-1)^{u_1+v_1}
\frac{1}{n!}D_{s_1}^n \left(\sum_{k+l=m,k,l\geq0}\frac{2^k}{l!} p_{k+u_1-v_1-1}(-\tilde{D}_t)D_{t_1}^l
\right)\tau_{(u_1-1,u_2)}\cdot\tau_{(v_1+1,v_2)}\\
&=\frac{1}{m!}D_{t_1}^m\left(
\sum_{k+l=n,k,l\geq0} \frac{(-2)^k}{l!}p_{k+v_2-u_2-1}(\tilde{D}_s)D_{s_1}^l
\right)\tau_{(u_1,u_2+1)}\cdot\tau_{(v_1,v_2-1)}\\
&\qquad+\frac{1}{m!}D_{t_1}^m\left(
\sum_{k+l=n,k,l\geq0} \frac{2^k}{l!}p_{k+u_2-v_2-1}(-\tilde{D}_s)D_{s_1}^l
\right)\tau_{(u_1,u_2-1)}\cdot\tau_{(v_1,v_2+1)}.
\end{aligned}
\end{align}
The first equation in \eqref{pfafftoda} is re-derived if $(u_1,u_2)=(v_1,v_2)$ and $(m,n)=(1,1)$, and the second equation in \eqref{pfafftoda} is re-derived if $(u_1,u_2)=(v_1-2,v_2)$ and $(m,n)=(0,1)$.

To conclude, we can give molecule solutions to the 2-component Pfaff lattice hierarchy. 
\begin{proposition}
The 2-component Pfaff lattice hierarchy \eqref{bi} admit the following molecule solutions
\begin{align*}
\tau_{(v_1,v_2)}=\pf(0^{(1)},\cdots,v_1-1^{(1)},0^{(2)},\cdots,v_2-1^{(2)}),\quad v_1,v_2\in\mathbb{N},\, v_1+v_2\in 2\mathbb{N}
\end{align*}
with $\tau_{(0,0)}=1$. Moreover, those Pfaffian elements satisfy following time evolutions
\begin{align*}
&\p_{t_n}\pf(i^{(1)},j^{(1)})=\pf(i+n^{(1)},j^{(1)})+\pf(i^{(1)},j+n^{(1)}),&\p_{t_n}\pf(i^{(1)},j^{(2)})=\pf(i+n^{(1)},j^{(2)}),\\
&\p_{s_n}\pf(i^{(2)},j^{(2)})=\pf(i+n^{(2)},j^{(2)})+\pf(i^{(2)},j+n^{(2)}),&\p_{s_n}\pf(i^{(1)},j^{(2)})=\pf(i^{(1)},j+n^{(2)}),\\
&\p_{t_n}\pf(i^{(2)},j^{(2)})=\p_{s_n}\pf(i^{(1)},j^{(1)})=0.
\end{align*}
\end{proposition}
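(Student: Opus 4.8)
The plan is to recognise that the claimed molecule solutions are exactly the $\tau$-functions out of which the bilinear identity \eqref{bi} was built, so that the whole verification reduces to checking that every ingredient of that derivation survives under the prescribed flows. First I would note that the evolutions listed for $\pf(i^{(k)},j^{(l)})$ are precisely the flows of Proposition \ref{prop3.3}, now written for all $t_n,s_n$ rather than only $t_1,s_1$; they are generated by the semi-classical weights $\omega_1(x;\mt)=\omega_1(x)e^{\xi(\mt,x)}$ and $\omega_2(x;\ms)=\omega_2(x)e^{\xi(\ms,x)}$, since exchanging differentiation and integration turns $\p_{t_n}$ into multiplication of the integrand by $x^n+y^n$ on the $(1,1)$ block and by $x^n$ on the $(1,2)$ block, and symmetrically for $\p_{s_n}$. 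Because each flow acts as multiplication by a monomial on the underlying weight, the families $\{\p_{t_n}\}$ and $\{\p_{s_n}\}$ mutually commute, so this is a genuine integrable deformation and $\pf(i^{(k)},j^{(l)})$ is nothing but the time-dependent bi-moment $m_{i,j}^{(k,l)}(\mt,\ms)$.

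Second, with these entries the Pfaffians $\tau_{(v_1,v_2)}=\Pf(0^{(1)},\cdots,v_1-1^{(1)},0^{(2)},\cdots,v_2-1^{(2)})$ are defined exactly when $v_1+v_2\in2\mathbb{N}$, so that the underlying antisymmetric array has even order, and they coincide with the normalisation $\tau$-functions occurring in the expressions \eqref{msop1}-\eqref{msop2} for the linear forms $R_{(v_1,v_2)}^{(i)}$. Hence Proposition \ref{prop4.1} applies unchanged and furnishes the Miwa-shifted representation of $d_{(v_1,v_2)}^{(i)}R_{(v_1,v_2)}^{(i)}$ in terms of $\tau_{(v_1-1,v_2)}(\mt-[x^{-1}],\ms)$ and $\tau_{(v_1,v_2-1)}(\mt,\ms-[x^{-1}])$, while Proposition \ref{prop5.2} supplies the Cauchy transforms $\mc_{\omega_1}$ and $\mc_{\omega_2}$ of these linear forms as $\tau_{(v_1+1,v_2)}(\mt+[z^{-1}],\ms)$ and $\tau_{(v_1,v_2+1)}(\mt,\ms+[z^{-1}])$. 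These two representations, together with the skew-symmetry $\langle f,g\rangle=-\langle g,f\rangle$ of the pairing (immediate from $\s(x,y)=-\s(y,x)$) and the general contour formula of Proposition \ref{prop5.1}, are the only facts consumed in producing \eqref{bi}.

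Third, I would simply re-run that derivation for the present $\tau$-functions: begin with $\langle R_{(v_1,v_2)}^{(1)}(x;\mt,\ms),R_{(u_1,u_2)}^{(1)}(y;\mt',\ms')\rangle=-\langle R_{(u_1,u_2)}^{(1)}(x;\mt',\ms'),R_{(v_1,v_2)}^{(1)}(y;\mt,\ms)\rangle$, expand both linear forms by Proposition \ref{prop4.1}, factor off $\omega_1(x;\mt)=e^{\xi(\mt-\mt',x)}\omega_1(x;\mt')$ and use Proposition \ref{prop5.1} to turn each pairing into a contour integral around infinity, and finally insert the Cauchy transforms of Proposition \ref{prop5.2}. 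Every line is then an identity among the declared Pfaffian $\tau$-functions, so the outcome is \eqref{bi} itself, valid for all $\mt,\mt',\ms,\ms'\in\mathbb{C}$; the normalisation $\tau_{(0,0)}=1$ fixes the initial datum of the molecule.

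The main obstacle is not any computation but the non-degeneracy needed to license the chain above: Propositions \ref{prop4.1} and \ref{prop5.2} presuppose that the linear forms $R_{(v_1,v_2)}^{(i)}$ exist and are genuinely skew-orthogonal, equivalently that the bordering Pfaffians $\tau_{(v_1\pm1,v_2)}$ and $\tau_{(v_1,v_2\pm1)}$ entering the factors $d_{(v_1,v_2)}^{(i)}$ do not vanish. I would therefore carry out the argument under the standing genericity hypothesis already implicit in Definition \ref{msopdef}, namely that $\s$ and the weights are chosen so that all these Pfaffians are nonzero throughout the relevant time domain; granting this, the verification is complete.
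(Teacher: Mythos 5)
Your proposal is correct and follows essentially the same route as the paper: the paper states this proposition without a separate proof precisely because it is a recapitulation of the Section \ref{sec4.2} derivation, where the bilinear identity \eqref{bi} was obtained from the skew-symmetry of the pairing applied to the MSOP linear forms, whose $\tau$-functions are exactly these moment Pfaffians evolving as in Proposition \ref{prop3.3}, via Propositions \ref{prop4.1}, \ref{prop5.1} and \ref{prop5.2}. Your closing caveat about the non-vanishing of the bordering Pfaffians (so that the linear forms and their Cauchy transforms are well defined) is a genericity assumption the paper leaves implicit, and stating it explicitly is a sound addition rather than a deviation.
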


\section{Concluding remarks}
In this paper, we develop ideas for how to properly define multiple skew orthogonal polynomials. This  concept should be appealing, as multiple orthogonal polynomials have been widely investigated in the fields of random matrices and integrable systems. As an application, we considered appropriate time deformations on multiple skew orthogonal polynomials, which were turned out to have tight connections with Pfaff-Toda hierarchy considered earlier by Takasaki. 
In our paper, we called the corresponding integrable hierarchy as 2-component Pfaff lattice hierarchy because they could be viewed from the perspective of multiple skew orthogonal polynomials. As mentioned in Takasaki's paper \cite{takasaki09}, Pfaff lattice hierarchy and multi-component Pfaff lattice hierarchy have many common properties. However, multiple skew orthogonal polynomials have compact recurrence relations shown in \eqref{sub1}-\eqref{sub4}, which play important roles in the formulation of  spectral problems for 2-component Pfaff lattice hierarchy. 

There are still interesting problems to continue. One is to seek for proper applications into random matrix theory. Both the Gaussian and chiral unitary models with a source are examples of determinantal point processes. In random matrix theory, Pfaffian point processes also arise naturally, we expect to find a random matrix model characterized by those multiple skew orthogonal polynomials. Besides, there are several 2-component BKP hierarchies \cite{shiota89,kac97} and whether their solutions are related to those multiple skew orthogonal polynomials is worthy studying.

\section*{Acknowledgement}
The authors thank Prof. Peter Forrester for his useful comments. 

\section*{Disclosure statement}
There is no any potential conflict of interest.

\section*{Funding}
S. Li was partially supported by the National Natural Science Foundation of China (Grant no. 12101432, 12175155), and G. Yu was supported by National Natural Science Foundation of China (Grant no. 11871336).

\begin{appendix}

\section{Pfaffian identitis}

There are two different kinds of Pfaffian identities, c.f. \cite[eq. 2.95' $\&$ 2.96']{hirota04} 
\begin{subequations}
\begin{align}
&\pf(\ast,a_1,a_2,a_3,a_4)\pf(\ast)=\pf(\ast,a_1,a_2)\pf(\ast,a_3,a_4)\nonumber\\
&\qquad\qquad-\pf(\ast,a_1,a_3)\pf(\ast,a_2,a_4)+\pf(\ast,a_1,a_4)\pf(\ast,a_2,a_3),\label{a1a}\\
&\pf(\star,a_1,a_2,a_3)\pf(\star,a_4)=\pf(\star,a_2,a_3,a_4)\pf(\star,a_1)\nonumber\\
&\qquad\qquad-\pf(\star,a_1,a_3,a_4)\pf(\star,a_2)+\pf(\star,a_1,a_2,a_4)\pf(\star,a_3),\label{a1b}
\end{align}
\end{subequations}
where $\ast$ and $\star$ are sets of even and odd-number symbols respectively.

\section{Derivative formulas for Wronskian type Pfaffians}\label{apb}
Wronskian-type Pfaffians are well investigated in soliton theory due to its wide applications in coupled KP theory. In \cite[Sec. 3.4]{hirota04}, Pfaffian element $\pf(i,j)$ satisfying the differential rules with respect to the variables $\mt=(t_1,t_2,\cdots)$ by
\begin{align}\label{d1}
\p_{t_n}\pf(i,j)=\pf(i+n,j)+\pf(i,j+n)
\end{align}
was called Wronskian type Pfaffians. For more details about Wronskian type Pfaffian and its discrete counterparts, please refer to \cite{ohta04}. It was shown that if Pfaffian elements satisfy \eqref{d1}, then
\begin{align*}
\p_{t_n}\Pf(i_0,i_1,\cdots,i_{2N-1})=\sum_{k=0}^{2N-1}\Pf(i_0,\cdots,i_k+n,\cdots,i_{2N-1}).
\end{align*}
This was proved by induction. 

In this paper, we need to introduce 2-component Pfaffian $\tau$-functions, 
indexed by $\mathcal{I}=\{i_0,\cdots,i_n\}$ and $\mathcal{J}=\{j_0,\cdots,j_m\}$ with $n,m\in\mathbb{N}$ and $n+m\in2\mathbb{N}$. Pfaffian elements in this case should satisfy a 2-component Wronskian type generalization (c.f. Prop \ref{prop3.3})
\begin{align*}
&\p_{t_k}\pf(i_\alpha,i_\beta)=\pf(i_\alpha+k,i_\beta)+(i_\alpha,i_\beta+k),&&\p_{t_k}\pf(i_\alpha,j_\beta)=\pf(i_\alpha+k,j_\beta),&&&\p_{t_k}\pf(j_\alpha,j_\beta)=0,\\
&\p_{s_k}\pf(j_\alpha,j_\beta)=\pf(j_\alpha+k,j_\beta)+(j_\alpha,j_\beta+k), &&\p_{s_k}\pf(i_\alpha,j_\beta)=\pf(i_\alpha,j_\beta+k),&&&\p_{s_k}\pf(i_\alpha,i_\beta)=0,
\end{align*}
then we have the following proposition.
\begin{proposition}\label{propb1}
If Pfaffian elements satisfy the above derivative relations, then one has
\begin{align*}
&\p_{t_k}\pf(i_0,\cdots,i_n,j_0,\cdots,j_m)=\sum_{\alpha=0}^n \pf(i_0,\cdots,i_\alpha+k,\cdots,i_n,j_0,\cdots,j_m),\\
&\p_{s_k}\pf(i_0,\cdots,i_n,j_0,\cdots,j_m)=\sum_{\alpha=0}^m \pf(i_0,\cdots,i_n,j_0,\cdots,j_\alpha+k,\cdots,j_m).
\end{align*}
\end{proposition}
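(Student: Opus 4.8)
The plan is to prove both identities by induction on $N$, where $2N=n+m+2$ is the (even) number of indices, mirroring the induction used for the one-component formula recalled above. Since the $\p_{s_k}$ statement is obtained from the $\p_{t_k}$ statement simply by interchanging the roles of the $i$-indices and the $j$-indices (the hypotheses are symmetric under this swap), it suffices to establish the $\p_{t_k}$ formula. It is convenient to encode the three entry relations uniformly: writing $\sigma(a)=1$ when $a$ is an $i$-type index and $\sigma(a)=0$ when $a$ is a $j$-type index, all three given rules read $\p_{t_k}\pf(a,b)=\sigma(a)\pf(a+k,b)+\sigma(b)\pf(a,b+k)$, since the $(j,j)$ pair vanishes and the $(i,j)$ pair shifts only its $i$-slot. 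This is exactly the base case $N=1$ of the claim, namely that differentiation shifts each $i$-index in turn and ignores the $j$-indices.

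For the inductive step I would first expand the $2N$-index Pfaffian along one fixed index (say the first), $\pf(b_1,\dots,b_{2N})=\sum_{l=2}^{2N}(-1)^{l}\pf(b_1,b_l)\pf(b_2,\dots,\widehat{b_l},\dots,b_{2N})$, using the standard expansion of \cite{hirota04}. Applying $\p_{t_k}$ by the Leibniz rule splits each summand into two pieces: one where the entry $\pf(b_1,b_l)$ is differentiated via the base-case relation, and one where the $(2N-2)$-index sub-Pfaffian is differentiated via the induction hypothesis. The sub-Pfaffian contributions already produce the shifts $b_\alpha\mapsto b_\alpha+k$ for every $i$-index $b_\alpha$ with $\alpha\neq 1,l$, while the $\sigma(b_1)\pf(b_1+k,b_l)\pf(\text{sub})$ pieces assemble, via the same expansion read backwards, into the single term $\pf(b_1+k,b_2,\dots,b_{2N})$. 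The remaining pieces $\sigma(b_l)\pf(b_1,b_l+k)\pf(\text{sub})$ must be recombined with the induction-hypothesis terms in which a shifted $b_l+k$ sits inside the sub-Pfaffian, so that for each fixed $i$-index $b_l$ these reconstruct the single full Pfaffian $\pf(b_1,\dots,b_l+k,\dots,b_{2N})$. Collecting everything yields $\sum_{\alpha:\,\sigma(b_\alpha)=1}\pf(b_1,\dots,b_\alpha+k,\dots,b_{2N})$, which is the asserted formula.

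The main obstacle is precisely this last recombination: one must check that the sign $(-1)^{l}$ carried by the expansion, the sign produced when the entry relation moves the shifted index $b_l+k$ past $b_1$, and the sign internal to the induction-hypothesis term all line up, so that the two distinct sources of ``$b_l$ shifted'' genuinely coincide rather than cancel or double-count. The $j$-indices enter only as spectators throughout, since $\p_{t_k}$ annihilates $(j,j)$ entries and touches only the $i$-slot of $(i,j)$ entries, so they contribute no new shifted terms and simply ride along in each sub-Pfaffian. A cleaner alternative that avoids the induction altogether is to regard $\pf(b_1,\dots,b_{2N})$ as a polynomial $P$ in the antisymmetric array of entries $X_{\alpha\beta}=\pf(b_\alpha,b_\beta)$ and apply the chain rule $\p_{t_k}P=\sum_{\alpha<\beta}(\p P/\p X_{\alpha\beta})\,\p_{t_k}X_{\alpha\beta}$; since $\p P/\p X_{\alpha\beta}$ is the signed sub-Pfaffian obtained by deleting $b_\alpha,b_\beta$, matching this expression term-by-term against the cofactor expansions of each $\pf(b_1,\dots,b_\gamma+k,\dots,b_{2N})$ gives the identity directly, reducing the whole argument to the same sign verification but organized in one step.
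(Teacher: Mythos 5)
Your proposal is correct and follows essentially the same route as the paper's own proof: induction on the number of indices, expansion of the Pfaffian along the first index, the Leibniz rule combined with the entry relations and the induction hypothesis, and then the recombination in which the $\pf(b_1+k,b_l)$ pieces reassemble into the term with the first index shifted while the $\pf(b_1,b_l+k)$ pieces pair with the shifted sub-Pfaffian terms to rebuild the remaining shifted Pfaffians. Your uniform $\sigma$-bookkeeping for $i$-type versus $j$-type indices and the reduction of the $\p_{s_k}$ case to the $\p_{t_k}$ case by symmetry are only cosmetic differences from the paper's presentation.
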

\begin{proof}
Here we only prove the first equation by using induction, the second one can be similarly proved. Noting that
\begin{align*}
\p_{t_k}\pf(i_0,\cdots,i_n,j_0,\cdots,j_m)&=\p_{t_k}\left(
\sum_{i_l\in\mathcal{I}}(-1)^{l-1}\pf(i_0,i_l)\pf(i_1,\cdots,\hat{i}_l,\cdots,i_n,j_0,\cdots,j_m)\right.\\
&\qquad\left.+\sum_{j_l\in \mathcal{J}}(-1)^{n+l}\pf(i_0,j_l)\pf(i_1,\cdots,i_n,j_0,\cdots,\hat{j}_l,\cdots,j_m)\right)
\end{align*}
where the first part is equal to
\begin{subequations}
\begin{align}
&\sum_{i_l\in\mathcal{I}}(-1)^{l-1}\pf(i_0+k,i_l)\pf(i_1,\cdots,\hat{i}_l,\cdots,i_n,j_0,\cdots,j_m)\label{part1}\\
&+\sum_{i_l\in\mathcal{I}}(-1)^{l-1}\pf(i_0,i_l+k)\pf(i_1,\cdots,\hat{i}_l,\cdots,i_n,j_0,\cdots,j_m)\label{part2}\\
&+\sum_{i_l\in\mathcal{I}}(-1)^{l-1}\pf(i_0,i_l)\sum_{\alpha\ne l}\pf(i_1,\cdots,i_\alpha+k,\cdots,\hat{i}_l,\cdots,i_n,j_0,\cdots,j_m),\label{part3}
\end{align}
\end{subequations}
while the derivative of the second part is equal to 
\begin{subequations}
\begin{align}
&\sum_{j_l\in\mathcal{J}}(-1)^{n+l}\pf(i_0+k,j_l)\pf(i_1,\cdots,i_n,j_0,\cdots,\hat{j}_l,\cdots,j_m)\label{part4}\\
&+\sum_{j_l\in\mathcal{J}}(-1)^{n+l}\pf(i_0,j_l)\sum_{\alpha=1}^n\pf(i_1,\cdots,i_\alpha+k,\cdots,i_n,j_0,\cdots,\hat{j}_l,\cdots,j_m).\label{part5}
\end{align}
\end{subequations}
Therefore, by summing \eqref{part1} and \eqref{part4} up, one obtains
\begin{align*}
\pf(i_0+k,i_1,\cdots,i_n,j_0,\cdots,j_m).
\end{align*}
The summation of rest three equations is equal to 
\begin{align*}
\sum_{\alpha=1}^n \pf(i_0,\cdots,i_\alpha+k,\cdots,i_n,j_0,\cdots,j_m),
\end{align*}
and our proof is complete.
\end{proof}

\end{appendix}

\end{document}